\newtheorem{algorithm}{Procedure}
\newtheorem{assumption}{Assumption}
\journalname{Discrete Event Dynamic Systems}
\newcommand{\MR}[1]{\textcolor{blue}{MR: #1}}
\begin{document}

\title{Synthesis of Successful Actuator Attackers on Supervisors
%
}


\author{
Liyong Lin \and Sander Thuijsman \and Yuting Zhu \and Simon Ware \and Rong Su \and Michel Reniers
}

\authorrunning{L. Lin, S.B. Thuijsman, Y. Zhu, S. Ware, R. Su, M.A. Reniers} 

\institute{Liyong Lin, Yuting Zhu, Simon Ware, Rong Su  \at
              Electrical and Electronic Engineering, Nanyang Technological University, Singapore \\
\email{liyong.lin@ntu.edu.sg, yuting002@e.ntu.edu.sg, sware@ntu.edu.sg, rsu@ntu.edu.sg} 
           \and
            Sander Thuijsman, Michel Reniers \at
              Mechanical Engineering Department,  Eindhoven University of Technology, Netherlands\\
              \email{s.b.thuijsman@student.tue.nl, m.a.reniers@tue.nl} \\\\
Liyong Lin and Sander Thuijsman contribute equally to this work.}

\date{Received: date / Accepted: date}

\maketitle

\begin{abstract}
In this work, we propose and develop a new discrete-event based actuator attack model on the closed-loop system formed by the plant and the supervisor. We assume the actuator attacker partially observes the execution of the closed-loop system and eavesdrops the control commands issued by the supervisor. The attacker can modify each control command on a specified subset of attackable events. The attack principle of the actuator attacker is to remain covert until it can establish a successful attack and lead the attacked closed-loop system into generating certain damaging strings. We present a characterization for the existence of a successful attacker, via a new notion of attackability, and prove the existence of the supremal successful actuator attacker, when both the supervisor and the attacker are normal (that is, unobservable events to the supervisor cannot be disabled by the supervisor and unobservable events to the attacker cannot be attacked by the attacker). Finally, we present an algorithm to synthesize the supremal successful attackers that are represented by Moore automata. 
\keywords{
cyber-physical systems \and discrete-event systems \and supervisory control  \and actuator attack \and partial observation
}
\end{abstract}


\section{Introduction}

Recently, cyber-physical systems have drawn much research interest within the discrete-event systems and formal methods community~\cite{CarvalhoEnablementAttacks},~\cite{Su2018},~\cite{Goes2017},~\cite{Lanotte2017},~\cite{Jones2014},~\cite{WTH17},~\cite{LACM17}. Due to the prevalence of cyber-physical systems in modern society and the catastrophic consequences that could occur after these systems get attacked~\cite{MKBDLP12},~\cite{D13}, it is important to design resilient control mechanisms against adversarial attacks; the first step towards this goal is to understand when adversarial attacks can succeed. 


In this work, we focus on discrete-event systems as our model of cyber-physical systems and consider those adversaries that use actuator attacks. Most of the existing works in the discrete-event framework consider the problem of deception attack instead~\cite{Su2018},~\cite{Goes2017},~\cite{WTH17}, where the attacker would accomplish the attack goal by altering the sensor readings. The work that is most closely related to ours is~\cite{CarvalhoEnablementAttacks}, in which the authors present an actuator enablement attack scenario. For actuator enablement attack (AE attack), the attacker's goal is to drive the plant into an unsafe state by overwriting the supervisor's disablement actions on some attackable events with enablement actions. Our work is mainly inspired by~\cite{CarvalhoEnablementAttacks}; here we present some interesting  generalizations and enhancements over the setup presented in~\cite{CarvalhoEnablementAttacks}, which allows us to model a more powerful actuator attacker and consider a richer set of defense strategies. The main contributions of this work are as follows:
\begin{enumerate}
\item We present a formal formulation of the supervisor (augmented with a monitoring mechanism), attacker, attacked closed-loop system and the attack goal. Compared with~\cite{CarvalhoEnablementAttacks}, the attacker is not restricted to  having the same observation scope as the supervisor. Furthermore, the attacker  eavesdrops the control commands issued by the supervisor, which could be used to refine the knowledge of the attacker about the string generated by the plant. The actuator attack mechanism does not have to be AE attacks. The attacker may need to disable some enabled attackable events to fulfill the attack goal; indeed, in many cases, this is the only way to accomplish the attack goal due to partial observation by the attacker.
\item We then present a characterization for the existence of a successful attacker against a normal supervisor, under a normality assumption that all the attackable events are observable to the attacker; a notion of attackability plays an important role in the characterization. After that, we prove the existence of the supremal successful attacker and then present a characterization of it. 
\item We also provide an algorithm for the synthesis of the supremal successful attackers that are represented by Moore automata.
\end{enumerate}

The paper is organized as follows: Section~\ref{sec: prel} is devoted to the preliminaries. Then, in Section~\ref{section:SS}, we shall provide a detailed explanation about the system setup, including the supervisor (augmented with a monitoring mechanism), attacker, attacked closed-loop system and the attack goal. In Section~\ref{section: ASNS}, we present the characterization results, together with a synthesis algorithm for synthesizing the supremal successful attackers on normal supervisors, under the assumption that attackable events are observable to the attackers. Finally, discussions and conclusions are presented in Section~\ref{section:DC}.
\section{Preliminaries}
\label{sec: prel}
We assume the reader is familiar with the basics of supervisory control theory~\cite{WMW10},~\cite{CL99} and automata theory~\cite{HU79}. In this section, we shall recall the basic notions and terminology that are necessary to understand this work.

A finite state automaton is a 4-tuple $G=(Q, \Sigma, \delta, q_0)$, where $Q$ is the finite set of states, $\Sigma$ the alphabet of events, $\delta: Q \times \Sigma \rightarrow Q$ the partial transition function\footnote{As usual, $\delta$ is naturally extended to the partial transition function $\delta: Q \times \Sigma^* \rightarrow Q$. We write $\delta(q, s)!$ to mean that string $s$ is defined at state $q$. For any $Q' \subseteq Q, L \subseteq \Sigma^*$, we let $\delta(Q', L)=\{\delta(q, s) \mid q \in Q', s \in L, \delta(q, s)!\}$. We also view $\delta \subseteq Q \times \Sigma^* \times Q$ as a transition relation.}, $q_0 \in Q$ the initial state. A control constraint over $\Sigma$ is a tuple $(\Sigma_c, \Sigma_o)$ of sub-alphabets of $\Sigma$, where $\Sigma_o \subseteq \Sigma$ denotes the set of observable events and $\Sigma_c \subseteq \Sigma$ the set of controllable events. Let $\Sigma_{uo}:=\Sigma-\Sigma_o$ denote the set of unobservable events and $\Sigma_{uc}:=\Sigma-\Sigma_c$ the set of uncontrollable events. For each sub-alphabet $\Sigma' \subseteq \Sigma$, the natural projection $P: \Sigma^* \rightarrow \Sigma'^*$ is defined and naturally extended to a mapping between languages~\cite{WMW10}. Let $L(G)=\{s \in \Sigma^* \mid \delta(q_0, s)!\}$ denote the closed-behavior of $G$. A supervisor on $G$ w.r.t. control constraint $(\Sigma_c, \Sigma_o)$ is a map $V: P_{o}(L(G)) \rightarrow \Gamma$, where $P_o: \Sigma^* \rightarrow \Sigma_o^*$ is the natural projection and $\Gamma:=\{\gamma \subseteq \Sigma \mid \Sigma_{uc} \subseteq \gamma\}$. $V(w)$ is the control command issued by the supervisor after observing $w \in P_{o}(L(G))$. $V/G$ denotes the closed-loop system of $G$ under the supervision of $V$. The closed-behavior $L(V/G)$ of $V/G$ is inductively defined as follows:
\begin{enumerate}
\item $\epsilon \in L(V/G)$,
\item if $s \in L(V/G)$,  $\sigma \in V(P_{o}(s))$ and $s\sigma \in L(G)$, then $s\sigma \in L(V/G)$,
\item no other strings belong to $L(V/G)$.
\end{enumerate}

For any two strings $s, t \in \Sigma^*$, we write $s \leq t$ (respectively, $s<t$) to denote that $s$ is a prefix (respectively, strict prefix) of $t$. For any language $L \subseteq \Sigma^*$, $\overline{L}$ is used to denote the prefix-closure of $L$. For two finite state automata $G_1, G_2$, we write $G_1 \lVert G_2$ to denote their synchronous product. For any string $s$, we write $|s|$ to denote the length of $s$. For convenience, we often identify a singleton with the unique element it contains.

\section{System Setup}
\label{section:SS}
In this section, we shall introduce the system setup that will be followed in this work. Intuitive explanations about the attack architecture and assumptions are provided in Section~\ref{subs:BI}. After that, a formal setup is provided in Section~\ref{subs: FS}, which includes a formalization of the supervisor (augmented with a monitoring mechanism), attacker, attacked closed-loop system and the attack goal.
\subsection{Basic Ideas}
\label{subs:BI}
Let us consider the architecture for actuator attack that is shown in Fig.~\ref{fig:ring}. The plant $G$, which is a finite state automaton over $\Sigma$, is under the control of a (partially observing) supervisor $V$ w.r.t. $(\Sigma_c, \Sigma_o)$, which is also given by a finite state automaton over $\Sigma$. In addition, we assume the existence of an actuator attacker $A$ that is able to modify the control command $\gamma$ issued by the supervisor on a designated set of attackable events $\Sigma_{c, A} \subseteq \Sigma_c$, each time when it intercepts a control command\footnote{However, the attacker is not allowed to delete existing control commands or insert new control commands. The attacker cannot attack those events outside $\Sigma_{c, A}$; thus, by allowing the attacker to delete existing control commands or insert new control commands, the attacker will not be made more powerful.}. The plant $G$ follows the modified control command $\gamma'$ instead of $\gamma$. We assume the supervisor is augmented with a monitoring mechanism that monitors the execution of the closed-loop system (under attack)\footnote{As we shall see in Section~\ref{subsubs: S}, the monitoring mechanism is embedded into the supervisor $V$. Thus, $V$ is not only responsible for controlling the plant but also halting the execution of the closed-loop system after detecting an actuator attack. For more details, see our formalization of the system setup provided in Section~\ref{subs: FS}.}. If $\sigma \in \Sigma_{c, A}$ is enabled by the supervisor in $\gamma$, then the attacker is not discovered\footnote{The supervisor is not sure whether $\sigma$ has been disabled by an attacker, even if disabling $\sigma$ may result in deadlock, as the supervisor is never sure whether: 1) deadlock has occurred due to actuator attack, or 2) $\sigma$ will possibly fire soon (according to the internal mechanism of the plant).} by the supervisor when it disables $\sigma$ in $\gamma'$; on the other hand, if $\sigma \in \Sigma_{c, A}$ is disabled in $\gamma$, then the decision of enabling $\sigma$ in $\gamma'$ would place the attacker in the risk of being discovered immediately by the supervisor if $\sigma$ can be fired\footnote{Once $\sigma$ is fired, the supervisor immediately knows that an actuator attack has occurred if $\sigma$ is observable to the supervisor, since $\sigma$ is supposed to have been disabled. $\sigma$ can be fired only if it is enabled by the plant.}, unless $\sigma$ is unobservable to the supervisor. 

\begin{figure}[h]
\centering
\hspace*{-1mm}
\captionsetup{justification=centering}
\includegraphics[width=2.6in, height=1.4in]{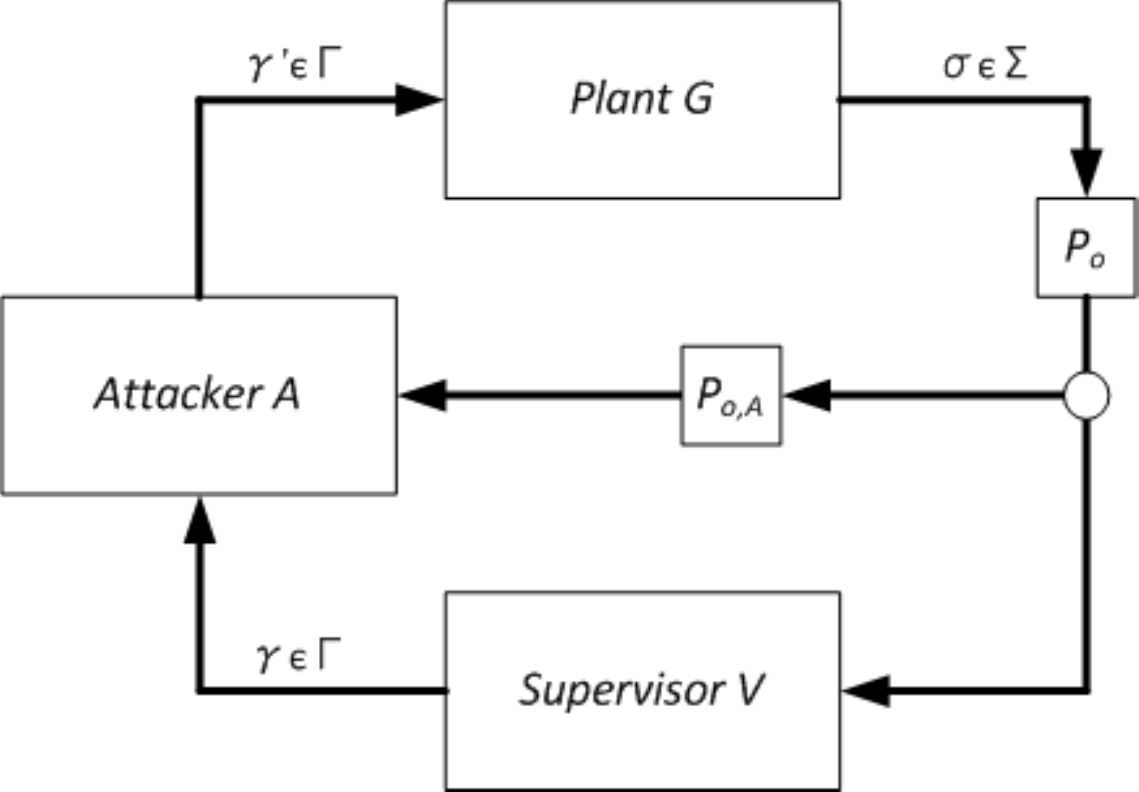} 
\caption{The architecture for actuator attack}
\label{fig:ring}
\end{figure}

We assume that the attacker has a complete knowledge about the models of the plant $G$ and the supervisor $V$ (and its realization $S$ using a finite state automaton), including the control constraint $(\Sigma_{c}, \Sigma_{o})$ over $\Sigma$. The attacker $A$ partially observes the execution of the closed-loop system and (at the same time) eavesdrops the sequence of control commands issued by the supervisor. We assume that the set $\Sigma_{o, A}$ of events observable to the attacker is a subset of the set $\Sigma_o$ of events observable to the supervisor. In other words, we assume that the attacker cannot deploy additional sensors than those that are available to the supervisor. Thus, the supervisor may have a better knowledge about the execution of the closed-loop system than the attacker does. The sequence of control commands (issued by the supervisor) encodes to a certain extent the supervisor's knowledge on the execution of the closed-loop system, which could be used by the attacker to better estimate the string generated by the plant and determine whether it shall establish an attack on some attackable events. We assume that each time when the supervisor observes an event (and makes a state transition), it issues a new control command that can be intercepted by the attacker. By observing a control command from the supervisor, the attacker concludes that an observable event $\sigma$ of the supervisor has occurred even if $\sigma$ may be unobservable to the attacker. 
 
The goal of the attacker is to drive the attacked closed-loop system into executing certain  damaging strings outside the controlled behavior. The supervisor has a mechanism for halting the execution of the closed-loop system after discovering an actuator attack. Thus, in order to attack the closed-loop system successfully, the attacker must remain covert\footnote{The attacker is covert if the supervisor is not sure whether an attack has occurred.}~\cite{Su2018} and enable some disabled (attackable) events only when either 1) the attack will not be discovered by the supervisor, i.e., the attacker remains covert, or 2) if the attack is discovered by the supervisor, then a damaging string has been generated. In particular, if $\sigma \in \Sigma_{c, A}$ is disabled by the supervisor in $\gamma$, then enabling $\sigma$ in $\gamma'$ will not place the attacker in the risk of being discovered by the supervisor if 1) $\sigma$ is unobservable to the supervisor, or 2) if $\sigma$ is not enabled by the plant and thus cannot be fired.\\


\subsection{Formal Setup}
\label{subs: FS}
In this section, we explain the formal setup that is used in this work. In Section~\ref{subsubs: S}, we provide a formal definition of a supervisor, which is augmented with a monitoring mechanism that will halt the execution of the (attacked) closed-loop system once an actuator attack is discovered. In Section~\ref{subsubs: AA}, we consider an actuator attacker that is able to partially observe the execution of the closed-loop system and (at the same time) eavesdrop the control commands issued by the supervisor. In Section~\ref{subsubs: ACLS}, we then introduce the notion of a damage-inflicting set and explain the attacked closed-loop system formed by the plant, supervisor and actuator attacker. Then, we present a formulation of the goal of the actuator attacker. 
\subsubsection{Supervisor}
\label{subsubs: S}
We recall that a supervisor on $G$ w.r.t. $(\Sigma_c, \Sigma_o)$ is a map $V: P_{o}(L(G)) \rightarrow \Gamma$, where $\Gamma=\{\gamma \subseteq \Sigma \mid \Sigma_{uc} \subseteq \gamma\}$. Since $P_{o}(L(G))=(P_{o}(L(G))-P_{o}(L(V/G))) \dot{\cup} P_{o}(L(V/G))$ and only those strings in $P_{o}(L(V/G))$ can be observed by the supervisor in the normal execution of the closed-loop system, the standard definition of a supervisor requires that the supervisor shall also apply control for those strings $w \in P_{o}(L(G))-P_{o}(L(V/G))$ that cannot be observed in the normal execution of the closed-loop system. From the point of view of the supervisor, it will conclude the existence of an (actuator) attacker and then halt the execution of the closed-loop system the first time when it observes some string $w \notin P_{o}(L(V/G))$. Under this interpretation, the supervisor is augmented with a monitoring mechanism for the detection of actuator attack and does not control outside $P_{o}(L(V/G))$; thus, the supervisor (augmented with a monitoring mechanism) is effectively\footnote{To explicitly model the monitoring mechanism, we could pad the domain of $V$ to $P_{o}(L(G))$ by setting $V(w)=$`halt' for any $w \in P_{o}(L(G))-P_{o}(L(V/G))$, where `halt' is a special control symbol that immediately halts the execution of the closed-loop system. This view will be implicitly adopted in the rest.} a map $V: P_{o}(L(V/G)) \rightarrow \Gamma$. 

\subsubsection{Actuator Attacker}
\label{subsubs: AA}
In this section, we shall present a formal definition of an actuator attacker. To that end, we need to model the partial observation capability of an attacker, taking into account of the fact that control commands issued by the supervisor can be eavesdropped. 

We here note that an actuator attacker may be able to attack the closed-loop system only if the supervisor has not halted the execution (of the closed-loop system); suppose $s$ is the string that is generated by the plant, then the attacker may be able to attack the closed-loop system only if $P_o(s) \in P_o(L(V/G))$ (see Section~\ref{subsubs: S}). It is important to understand the observation sequence for the attacker when $s$ is generated, which is the only factor that determines the attack decision of the attacker. If the plant executes any event $\sigma \in \Sigma_{uo}$ that is unobservable to the supervisor, then the attacker will observe nothing since $\sigma$ is unobservable to the attacker (no observation of execution from the plant) and $\sigma$ is unobservable to the supervisor (no control command will be issued by the supervisor). Thus, the attacker's observation sequence solely depends on the supervisor's observation sequence $P_{o}(s) \in P_o(L(V/G))$. Now, we still need to define a function $\hat{P}_{o, A}^{V}$ that maps supervisor's observation sequences to attacker's observation sequences. Let $P_o(s)=\sigma_1\sigma_2\ldots \sigma_n$ denote the supervisor's observation sequence. We shall explain how it can be transformed into the attacker's observation sequence. If the supervisor observes the execution of event $\sigma_i \in \Sigma_{o}$, then the supervisor will issue a new control command $V(\sigma_1\sigma_2\ldots \sigma_i)$, since $\sigma_i$ is observable to the supervisor, which is then intercepted by the attacker. There are two subcases for $\sigma_i \in \Sigma_o$, regarding the attacker's observation on the execution of the plant. If $\sigma_i \in \Sigma_{o, A} \subseteq \Sigma_o$, then the attacker will also observe execution of $\sigma_i$ from the plant. If $\sigma_i \in \Sigma_o-\Sigma_{o, A}$, then the attacker will observe nothing (that is, $\epsilon$) from the plant; however, based on the newly issued control command, it can infer that some event in $\Sigma_o-\Sigma_{o, A}$ has already been fired. Thus, at each step, the attacker observes $(P_{o, A}(\sigma_i), V(\sigma_1\sigma_2\ldots \sigma_i))$. Then, the attacker's observation sequence is simply a string in $((\Sigma_{o, A} \cup \{\epsilon\}) \times \Gamma)^*$. From the above discussion, we obtain that $\hat{P}_{o, A}^{V}: P_o(L(V/G)) \rightarrow ((\Sigma_{o, A} \cup \{\epsilon\}) \times \Gamma)^*$ is a function that maps supervisor's observation sequences to attacker's observation sequences, where, for any $w=\sigma_1\sigma_2\ldots \sigma_n \in P_o(L(V/G))$, $\hat{P}_{o, A}^{V}(w)$ is defined to be
\begin{center}
 $(P_{o, A}(\sigma_1), V(\sigma_1))(P_{o, A}(\sigma_2), V(\sigma_1\sigma_2))\ldots (P_{o, A}(\sigma_n), V(\sigma_1\sigma_2 \ldots \sigma_n)) \in ((\Sigma_{o, A} \cup \{\epsilon\}) \times \Gamma)^*$.
\end{center}
The set of all the possible observation sequences\footnote{The execution of the attacked closed-loop system will be halted once $w \notin P_o(L(V/G))$ is observed by the supervisor, at which point the attacker will not (need to) record any further observation sequence.} for the attacker is $\hat{P}_{o, A}^{V}(P_o(L(V/G)))$. The attacker modifies the control command it intercepts based upon its observation. We shall call the tuple $(\Sigma_{c, A}, \Sigma_{o, A})$ an {\em attack constraint} over $\Sigma$. An attacker on $(G, V)$ w.r.t. attack constraint $(\Sigma_{c, A}, \Sigma_{o, A})$ is a function $A: \hat{P}_{o, A}^{V}(P_o(L(V/G))) \rightarrow \Delta \Gamma$, where $\Delta \Gamma=2^{\Sigma_{c, A}}$. Here, $\Delta \Gamma$ denotes the set of all the possible attack decisions that can be made by the attacker on the set $\Sigma_{c, A}$ of attackable events. Intuitively, each $\Delta \gamma \in \Delta \Gamma$ denotes the set of enabled attackable events that are determined by the attacker. For any $P_o(s) \in P_o(L(V/G))$, the attacker determines the set $\Delta \gamma=A(\hat{P}_{o, A}^{V}(P_o(s)))$ of attackable events to be enabled based on its observation $\hat{P}_{o, A}^{V}(P_o(s))$. We notice that the set of attackable events enabled by the supervisor in $\gamma$ is $V(P_o(s)) \cap \Sigma_{c, A}$. Thus, the attacker applies {\em actual attack} iff $V(P_o(s))\cap \Sigma_{c, A} \neq A(\hat{P}_{o, A}^{V}(P_o(s)))$.

\subsubsection{Attacked Closed-loop System and Damage-Inflicting Set}
\label{subsubs: ACLS}
Based on the above formal definitions of a supervisor and an actuator attacker, we are ready to define the attacked closed-loop system formed by the plant, supervisor and attacker. 

We already know that, after string $s$ is generated in the plant with $P_o(s) \in P_o(L(V/G))$, the control command issued by the supervisor is $\gamma=V(P_o(s))$ and the attack decision made by the actuator attacker is $\Delta \gamma=A(\hat{P}_{o, A}^{V}(P_o(s)))$. Then, it follows that the modified control command is $\gamma':=(\gamma -\Sigma_{c, A}) \cup \Delta \gamma=(V(P_o(s))-\Sigma_{c, A}) \cup A(\hat{P}_{o, A}^{V}(P_o(s)))$. Now, given the supervisor $V$ and the actuator attacker $A$, we could lump them together into an equivalent (attacked) supervisor $V_A: P_o(L(V/G)) \rightarrow \Gamma$ such that, for any string $w \in P_o(L(V/G))$ observed by the attacked supervisor, the control command issued by $V_A$ is $V_A(w)=(V(w)-\Sigma_{c, A}) \cup A(\hat{P}_{o, A}^{V}(w))$. The plant $G$ follows the control command issued by the attacked supervisor $V_A$ and the attacked closed-loop system is denoted by $V_A/G$ (the grouping of supervisor $V$ and attacker $A$ into $V_A$ is illustrated in Fig.~\ref{fig:attack}). In the presence of an actuator attacker performing actuator attacks, the execution of the closed-loop system will be immediately halted once the supervisor detects an attack from the actuator attacker. Thus, the definition of the (attacked) closed-behavior $L(V_A/G)$ of $V_A/G$ is slightly different from that of $L(V/G)$ (due to the existence of the monitoring mechanism) and is inductively defined as follows:
\begin{enumerate}
\item $\epsilon \in L(V_A/G)$,
\item if $s \in L(V_A/G)$, $P_o(s) \in P_o(L(V/G))$, $\sigma \in V_A(P_{o}(s))$ and $s\sigma \in L(G)$,  then $s\sigma \in L(V_A/G)$,
\item no other strings belong to $L(V_A/G)$.
\end{enumerate}
We notice that $s\sigma \in L(V_A/G))$ only if $P_o(s) \in P_o(L(V/G))$. That is, if $P_o(s) \notin P_o(L(V/G))$, the supervisor already halts the execution of the closed-loop system and thus $s\sigma$ cannot be generated. The goal of the actuator attacker is to drive the attacked closed-loop system into executing certain damaging strings. Let $L_{dmg} \subseteq L(G)$ be some regular language over $\Sigma$ that denotes a so-called ``damage-inflicting set", where each string is a damaging string. We here shall require that $L_{dmg} \cap L(V/G)=\varnothing$, that is, no string in $L(V/G)$ could be damaging. Then, we have that $L_{dmg} \subseteq L(G)-L(V/G)$. The goal of the actuator attacker is formulated as follows:
\begin{enumerate}
\item $L(V_A/G) \cap L_{dmg}\neq \varnothing$, and
\item $L(V_A/G)-P_{o}^{-1}P_o(L(V/G))\subseteq L_{dmg}\Sigma^*$.
\end{enumerate}
The first condition says that the actuator attacker can drive the system into executing certain damaging strings. It is a ``possibilistic" statement in the sense that it is possible for a damaging string to be generated by the attacked closed-loop system. However, there is no guarantee that a damaging string will be generated in any single run of the attacked closed-loop system, due to potential uncontrollable factors that are beyond the capability of the attacker. The second condition says that once the attacked closed-loop system runs outside $P_{o}^{-1}P_o(L(V/G))$, i.e., $P_o(s) \notin P_o(L(V/G)$, at which point the supervisor detects the actuator attack and halts the execution of the attacked closed-loop system, a damaging string must have been generated\footnote{That is, the attacker is only detected when it is too late.}, i.e. some prefix of $s$ is a damaging string. If the actuator attacker achieves its goal, then it is considered to be successful. Thus, we say an attacker $A$ is {\em successful} on $(G, V)$ w.r.t. $(\Sigma_{c, A}, \Sigma_{o, A})$ and $L_{dmg}$ if 
\begin{enumerate}
\item $L(V_A/G) \cap L_{dmg}\neq \varnothing$, and
\item $L(V_A/G)-P_{o}^{-1}P_o(L(V/G))\subseteq L_{dmg}\Sigma^*$.
\end{enumerate}
A supervisor $V$ is said to be {\em resilient} (against actuator attack) on $G$ w.r.t. $(\Sigma_{c, A}, \Sigma_{o, A})$ and $L_{dmg}$ if there is no successful actuator attacker on $(G, V)$ w.r.t. $(\Sigma_{c, A}, \Sigma_{o, A})$ and $L_{dmg}$. It is of interest to consider the following two basic problems:
\begin{enumerate}
\item Given a plant $G$ over $\Sigma$, a supervisor $V$ on $G$ w.r.t. control constraint $(\Sigma_{c}, \Sigma_{o})$, an attack constraint $(\Sigma_{c, A}, \Sigma_{o, A})$ over $\Sigma$ and a damage-inflicting set $L_{dmg} \subseteq \Sigma^*$, does there exist a successful attacker on $(G, V)$ w.r.t. $(\Sigma_{c, A}, \Sigma_{o, A})$ and $L_{dmg}$? 
\item Given a plant $G$ over $\Sigma$, a control constraint $(\Sigma_{c}, \Sigma_{o})$ over $\Sigma$, a specification $E \subseteq \Sigma^*$, an attack constraint $(\Sigma_{c, A}, \Sigma_{o, A})$ over $\Sigma$ and a damage-inflicting set $L_{dmg} \subseteq \Sigma^*$, does there exist a supervisor $V$ on $G$ w.r.t. $(\Sigma_{c}, \Sigma_{o})$ such that 1) $L(V/G) \subseteq E$ and 2) there is no successful attacker on $(G, V)$ w.r.t. $(\Sigma_{c, A}, \Sigma_{o, A})$ and $L_{dmg}$?
\end{enumerate}
In the rest of this work, we shall present a solution to Problem 1 under the simplifying assumption of normality; Problem 2 will be left as future work. 
\begin{figure}[h]
\centering
\hspace*{-1mm}
\includegraphics[width=2.6in, height=1.8in]{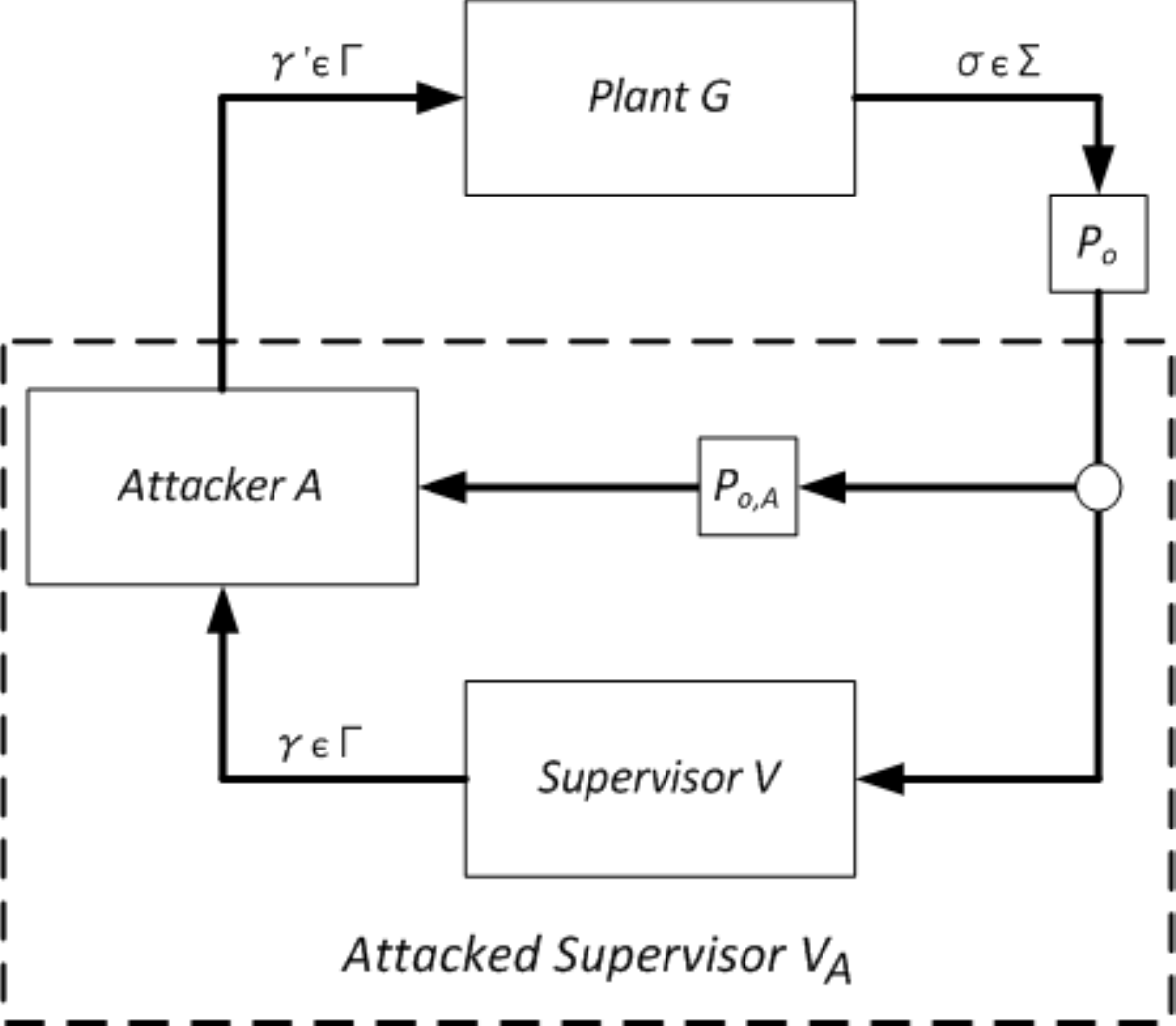} 
\captionsetup{justification=centering}
\caption{The equivalent attacked supervisor $V_A$}
\label{fig:attack}
\end{figure}
\section{Attack Synthesis for Normal Supervisors}
\label{section: ASNS}
A supervisor $V: P_{o}(L(V/G)) \rightarrow \Gamma$ is said to be {\em normal} if it is w.r.t. a control constraint $(\Sigma_{c}, \Sigma_o)$, where $\Sigma_c \subseteq \Sigma_o$. Then, we have that $\Sigma_{uo} \subseteq \Sigma_{uc}$, that is, unobservable events to the supervisor cannot be disabled by the supervisor. Thus, for any $w \in P_{o}(L(V/G))$, we have $V(w) \supseteq \Sigma_{uc} \supseteq \Sigma_{uo}$ by the definition of $\Gamma$. A supervisor $V$ (on $G$) w.r.t. control constraint $(\Sigma_{c}, \Sigma_o)$ is realized by a finite state automaton $S=(X, \Sigma, \zeta, x_0)$ that satisfies the  controllability and observability constraints\footnote{There are different realizations of the same map $V$. We focus on a canonical form of a supervisor proposed in~\cite{B1993}; other realizations of $V$ can be used with no essential difference, apart from the consideration of technical convenience.}~\cite{B1993}:
\begin{enumerate}
\item ({\em controllability}) for any state $x \in X$ and any uncontrollable event $\sigma \in \Sigma_{uc}$, $\zeta(x, \sigma)!$,
\item ({\em observability}) for any state $x \in X$ and any unobservable event $\sigma \in \Sigma_{uo}$, $\zeta(x, \sigma)!$ implies $\zeta(x, \sigma)=x$,
\end{enumerate}
where we have $L(V/G)=L(S \lVert G)$ and, for any $s \in L(G)$ such that $P_o(s) \in P_{o}(L(V/G))$, $V(P_o(s))=\{\sigma \in \Sigma \mid \zeta(\zeta(x_0, s'), \sigma)!, s' \in L(S) \cap P_o^{-1}P_o(s)\}$. For a normal supervisor $S$ w.r.t. control constraint $(\Sigma_{c}, \Sigma_o)$, the observability constraint is then reduced to: for any state $x \in X$ and any unobservable event $\sigma \in \Sigma_{uo}$, $\zeta(x, \sigma)=x$. 

In this section, we investigate the actuator attacker's attack strategy when the supervisors satisfy the normality property. Normal supervisors leave much reduced attack surfaces for the attacker, since events that are disabled must be observable to the supervisor and enabling those disabled (attackable) events (by the actuator attacker) may immediately break the covertness of the attacker, if they are also enabled by the plant. In other words, once the plant generates any string $s \notin L(V/G)$, which is only possible after the attacker enables a disabled attackable event, the supervisor will immediately halt the execution of the closed-loop system. Based on this observation, we shall now show that, with the normality assumption imposed on the supervisors, the definition of $L(V_A/G)$ can be simplified. 
\begin{lemma}
\label{lemma: DOL}
If $V$ is normal on $G$ w.r.t. $(\Sigma_c, \Sigma_o)$, then $L(V_A/G)$ is inductively defined as follows: 
\begin{enumerate}
\item $\epsilon \in L(V_A/G)$,
\item if $s \in L(V_A/G)$, $s \in L(V/G)$, $\sigma \in V_A(P_{o}(s))$ and $s\sigma \in L(G)$,  then $s\sigma \in L(V_A/G)$,
\item no other strings belong to $L(V_A/G)$.
\end{enumerate}
\end{lemma}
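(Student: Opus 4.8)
The plan is to read each of the two inductive definitions as the least fixed point of a monotone operator on languages over $\Sigma$ and to show the two fixed points coincide. Write $L_1$ for the language of Section~\ref{subsubs: ACLS} (whose generating clause is guarded by $P_o(s) \in P_o(L(V/G))$) and $L_2$ for the language defined in the statement (guarded instead by $s \in L(V/G)$). Since $s \in L(V/G)$ trivially implies $P_o(s) \in P_o(L(V/G))$, the clause for $L_2$ is a restriction of the clause for $L_1$, so the corresponding operators satisfy $F_2(L) \subseteq F_1(L)$ pointwise; a routine induction on the construction then gives $L_2 \subseteq L_1$. All the content of the lemma thus lies in the reverse inclusion $L_1 \subseteq L_2$, and this is where normality must be used.

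For $L_1 \subseteq L_2$ I would first isolate the auxiliary claim that carries the argument, namely that for every $s \in L_1$,
\[
 s \notin L(V/G) \;\Longrightarrow\; P_o(s) \notin P_o(L(V/G)),
\]
i.e.\ equivalently that any string of $L_1$ whose projection is still in $P_o(L(V/G))$ already lies in $L(V/G)$: under normality the supervisor observes a deviation the instant it occurs. I would prove this by induction on the construction of $L_1$. In the inductive step a string $s\sigma$ is admitted with $s \in L_1$, $P_o(s) \in P_o(L(V/G))$, $\sigma \in V_A(P_o(s))$ and $s\sigma \in L(G)$; applying the induction hypothesis to $s$, whose projection lies in $P_o(L(V/G))$, yields $s \in L(V/G)$. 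I then split on whether $\sigma \in V(P_o(s))$. If so, $s\sigma \in L(V/G)$ and the implication for $s\sigma$ holds vacuously. If not, then because $V_A$ and $V$ agree off $\Sigma_{c,A}$ we obtain $\sigma \in \Sigma_{c,A} \subseteq \Sigma_c \subseteq \Sigma_o$ (this is exactly the point where $\Sigma_c \subseteq \Sigma_o$ is invoked), so $\sigma$ is observable and $P_o(s\sigma) = P_o(s)\sigma$.

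The main obstacle is the remaining half of that second case: showing that a supervisor-disabled observable event cannot re-enter the observed behaviour, i.e.\ $\sigma \notin V(P_o(s))$ forces $P_o(s)\sigma \notin P_o(L(V/G))$. I would argue by contradiction: if $P_o(s)\sigma \in P_o(L(V/G))$, choose $w \in L(V/G)$ with $P_o(w) = P_o(s)\sigma$; since $\sigma$ is observable it is the final observable event of $w$, so $w = w'\sigma v$ with $v \in \Sigma_{uo}^*$ and $P_o(w') = P_o(s)$, and prefix-closure of $L(V/G)$ gives $w'\sigma \in L(V/G)$. Unwinding the definition of $L(V/G)$ at $w'\sigma$ yields $\sigma \in V(P_o(w')) = V(P_o(s))$, contradicting $\sigma \notin V(P_o(s))$. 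Hence $P_o(s\sigma) = P_o(s)\sigma \notin P_o(L(V/G))$, which completes the induction for the auxiliary claim.

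With the auxiliary claim established, $L_1 \subseteq L_2$ follows by a final short induction on the construction of $L_1$: whenever $s\sigma$ is admitted into $L_1$ via its clause, the claim upgrades the hypothesis $P_o(s) \in P_o(L(V/G))$ to $s \in L(V/G)$, which is precisely the extra premise that the clause for $L_2$ requires; the remaining premises $\sigma \in V_A(P_o(s))$ and $s\sigma \in L(G)$ are shared, and $s \in L_2$ holds by the induction hypothesis, so $s\sigma \in L_2$. Combining this with $L_2 \subseteq L_1$ gives $L_1 = L_2$, which is the asserted simplification of $L(V_A/G)$.
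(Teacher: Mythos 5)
Your proposal is correct and rests on exactly the same fact as the paper's one-line proof: the paper simply cites the normality identity $L(G)\cap P_o^{-1}P_o(L(V/G))=L(V/G)$, which makes the guard $P_o(s)\in P_o(L(V/G))$ equivalent to $s\in L(V/G)$ for strings of the attacked closed-loop system, and your auxiliary claim is precisely a self-contained inductive proof of the relevant instance of that identity (the contradiction argument showing a disabled observable event cannot reappear in $P_o(L(V/G))$ is where normality does its work, just as in the textbook proof). So the route is essentially the paper's, only written out in full rather than delegated to the cited reference.
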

\begin{proof}
This follows from $L(G) \cap P_o^{-1}P_o(L(V/G))=L(V/G)$ for supervisors that satisfy the normality condition~\cite{WMW10}. 
\end{proof}
Moreover, the goal of the attacker can also be simplified when $V$ is assumed to be normal.
\begin{lemma}
\label{Lemma: EQ}
If $V$ is normal on $G$ w.r.t. $(\Sigma_c, \Sigma_o)$, then,  
\begin{enumerate}
\item $L(V_A/G) \cap L_{dmg}\neq \varnothing$, and
\item $L(V_A/G)-P_{o}^{-1}P_o(L(V/G))\subseteq L_{dmg}\Sigma^*$
\end{enumerate}
iff $\varnothing \neq L(V_A/G)-L(V/G) \subseteq L_{dmg}$.
\end{lemma}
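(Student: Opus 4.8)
The plan is to reduce both goal conditions to statements about the single set $D := L(V_A/G) - L(V/G)$ and then prove the equivalence by elementary prefix reasoning, with the normality hypothesis entering through the structure of $L(V_A/G)$ supplied by Lemma~\ref{lemma: DOL}.

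First I would record two structural facts. Fact (i): every proper prefix of any $s \in L(V_A/G)$ belongs to $L(V/G)$. I would prove this by induction following the inductive construction in Lemma~\ref{lemma: DOL}; the base case $s=\epsilon$ is vacuous, and in the inductive step the rule that produces $s\sigma$ explicitly requires $s \in L(V/G)$, so the proper prefixes of $s\sigma$ — namely the proper prefixes of $s$ (covered by the induction hypothesis) together with $s$ itself — all lie in $L(V/G)$. Fact (ii): $L(V_A/G) - P_o^{-1}P_o(L(V/G)) = D$. Since every string in $L(V_A/G)$ is required to lie in $L(G)$, we have $L(V_A/G) \subseteq L(G)$, and by normality $L(G) \cap P_o^{-1}P_o(L(V/G)) = L(V/G)$ (the same fact used in Lemma~\ref{lemma: DOL}); intersecting with $L(V_A/G)$ gives $L(V_A/G) \cap P_o^{-1}P_o(L(V/G)) = L(V_A/G) \cap L(V/G)$, whence complementation yields Fact (ii). Consequently condition~2 reads exactly as $D \subseteq L_{dmg}\Sigma^*$.

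For the direction from $\varnothing \neq D \subseteq L_{dmg}$ to conditions~1 and~2, I would observe that $L_{dmg} \subseteq L_{dmg}\Sigma^*$ delivers condition~2 immediately (in the form from Fact (ii)), while any witness $s \in D \subseteq L_{dmg}$ also lies in $L(V_A/G)$, so $L(V_A/G) \cap L_{dmg} \neq \varnothing$, which is condition~1. For the converse I would use the standing hypothesis $L_{dmg} \cap L(V/G) = \varnothing$ twice. Non-emptiness of $D$ comes from condition~1: a witness $s \in L(V_A/G) \cap L_{dmg}$ satisfies $s \notin L(V/G)$ by disjointness, hence $s \in D$. The inclusion $D \subseteq L_{dmg}$ is the crux: given $s \in D$, condition~2 writes $s = tu$ with $t \in L_{dmg}$, so $t$ is a prefix of $s$ with $t \notin L(V/G)$; but Fact (i) forces every proper prefix of $s$ into $L(V/G)$, so $t$ cannot be a proper prefix, whence $t = s$ and $s \in L_{dmg}$.

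The step I expect to be the main obstacle is precisely this last upgrade from $D \subseteq L_{dmg}\Sigma^*$ to $D \subseteq L_{dmg}$; it is where the normality hypothesis genuinely does its work, through the ``first-exit'' structure of $L(V_A/G)$ captured by Fact (i) together with the disjointness $L_{dmg} \cap L(V/G) = \varnothing$. Everything else is set-theoretic bookkeeping that I would carry out routinely.
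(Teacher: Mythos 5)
Your proof is correct and follows essentially the same route as the paper: reduce condition~2 to $L(V_A/G)-L(V/G)\subseteq L_{dmg}\Sigma^*$ via normality, get non-emptiness from condition~1 and the disjointness $L_{dmg}\cap L(V/G)=\varnothing$, and upgrade the inclusion to $L_{dmg}$ by observing that every proper prefix of a string in $L(V_A/G)$ lies in $L(V/G)$ and hence outside $L_{dmg}$. The paper phrases this last step via the one-step decomposition $s=s'\sigma$ with $s'\in L(V/G)$ and prefix-closedness of $L(V/G)$, which is the same fact your induction establishes.
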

\begin{proof}
Since $V$ is normal, we have $L(G) \cap P_{o}^{-1}P_o(L(V/G))=L(V/G)$. Thus, $L(V_A/G)-L(V/G)=L(V_A/G)-(L(G) \cap P_{o}^{-1}P_o(L(V/G)))=L(V_A/G)-P_{o}^{-1}P_o(L(V/G))$. It is clear that the right hand side (of the ``iff" statement) implies the left hand side (of the ``iff" statement). In the following, we only need to show that the left hand side implies the right hand side. Suppose the left hand side holds. That is, we have 
\begin{enumerate}
\item $L(V_A/G) \cap L_{dmg}\neq \varnothing$, and
\item $L(V_A/G)-L(V/G)\subseteq L_{dmg}\Sigma^*$ (since $L(V_A/G)-P_{o}^{-1}P_o(L(V/G))=L(V_A/G)-L(V/G)$).
\end{enumerate}
Then, by 1), we know that there exists some string $s \in L(V_A/G) \cap L_{dmg}$. Since $L_{dmg} \cap L(V/G)=\varnothing$, we then have that $s \notin L(V/G)$. Thus, we conclude that $s \in L(V_A/G)-L(V/G)$ and $\varnothing \neq L(V_A/G)-L(V/G)$. Now, let $s$ be any string in $L(V_A/G)-L(V/G)$. By the definition of $L(V_A/G)$ and Lemma~\ref{lemma: DOL}, we know\footnote{$s$ cannot be $\epsilon$ since $\epsilon \in L(V/G)$.} that $s=s'\sigma$ for some $s' \in L(V/G)$. Since $s' \in L(V/G)$, we have $\overline{s'} \cap L_{dmg}=\varnothing$. Since $s'\sigma=s \in L_{dmg}\Sigma^*$ by 2), we have that $\overline{s'\sigma} \cap L_{dmg} \neq \varnothing$. The only possibility is that $s=s'\sigma \in L_{dmg}$. Thus, we have $L(V_A/G)-L(V/G)\subseteq L_{dmg}$. In conclusion, we have that $\varnothing \neq L(V_A/G)-L(V/G) \subseteq L_{dmg}$.
\end{proof}

\begin{remark}
For a normal supervisor $V$, by Lemma~\ref{Lemma: EQ}, $A$ is successful on $(G, V)$ w.r.t. $(\Sigma_{c, A}, \Sigma_{o, A})$ and $L_{dmg}$ iff $\varnothing \neq L(V_A/G)-L(V/G) \subseteq L_{dmg}$.  
\end{remark}
We make the following simplifying assumption throughout the rest of this work.
\begin{assumption}
\label{assump: normalattack}
$\Sigma_{c, A} \subseteq \Sigma_{o, A}$ holds for the attack constraint $(\Sigma_{c, A}, \Sigma_{o, A})$. \qed
\end{assumption}
\begin{remark}
\label{remark: asspt}
Intuitively, Assumption~\ref{assump: normalattack} states that the attacker cannot attack events that it cannot observe. With Assumption~\ref{assump: normalattack}, we then have the following inclusion relations in the case of attacking normal supervisors.
\begin{enumerate}
\item $\Sigma_{c, A} \subseteq \Sigma_{o, A} \subseteq \Sigma_o$
\item $\Sigma_{c, A} \subseteq \Sigma_{c} \subseteq \Sigma_o$.
\end{enumerate}
\end{remark}
An attacker $A$ is said to be {\em normal} if it is w.r.t. an attack constraint $(\Sigma_{c, A}, \Sigma_{o, A})$, where $\Sigma_{c, A} \subseteq \Sigma_{o, A}$. Thus, in this work we consider the special case of normal supervisors and normal attackers. We shall now introduce the notion of an enabling actuator attacker, which plays an important role in solving the problem of synthesizing attackers in this setup (see Remark~\ref{remark: asspt}). 
\begin{definition}
An attacker $A$ on $(G, V)$ w.r.t. $(\Sigma_{c, A}, \Sigma_{o, A})$ is said to be an {\em enabling} attacker if $A(\hat{P}_{o, A}^{V}(w)) \supseteq V(w) \cap \Sigma_{c, A}$, for any $w \in P_o(L(V/G))$. \qed
\end{definition}
For an enabling actuator attacker $A$, we have that $V_A(w)=(V(w)-\Sigma_{c, A}) \cup A(\hat{P}_{o, A}^{V}(w)) \supseteq V(w)$, for any $w \in P_o(L(V/G))$. Intuitively, it never disables attackable events that are enabled by the supervisor, that is, the only attack decision of an enabling attacker is to enable attackable events that are disabled by the supervisor. Essentially, an enabling actuator attacker only performs AE attacks~\cite{CarvalhoEnablementAttacks}. It follows that, for any enabling actuator attacker $A$, we have $L(V_A/G) \supseteq L(V/G)$. We are now ready to state the following theorem; it states that, with the normality assumption imposed on the supervisors and attackers, we only need to focus on AE attacks in order to determine the existence of successful attackers. 
\begin{theorem}
\label{theorem:existence}
Let the plant $G$ and the normal supervisor $V$ be given. Then, the following three statements are equivalent. 
\begin{enumerate}
\item There exists a successful attacker on $(G, V)$ w.r.t. $(\Sigma_{c, A}, \Sigma_{o, A})$ and $L_{dmg}$.
\item There exists a successful enabling attacker on $(G, V)$ w.r.t. $(\Sigma_{c, A}, \Sigma_{o, A})$ and $L_{dmg}$.
\item There exists a string $s \in L(V/G)$ and an attackable event $\sigma \in \Sigma_{c, A}$, such that 
\begin{enumerate}
\item $s\sigma \in L_{dmg}$
\item for any $s' \in L(V/G)$, $\hat{P}_{o, A}^{V}(P_o(s))=\hat{P}_{o, A}^{V}(P_o(s'))$ and $s'\sigma \in L(G)$ together implies $s'\sigma \in L_{dmg}$. \qed
\end{enumerate}
\end{enumerate}
\end{theorem}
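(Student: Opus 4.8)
The plan is to establish the cycle of implications $(2)\Rightarrow(1)$, $(3)\Rightarrow(2)$, and $(1)\Rightarrow(3)$; combined with the characterization of success in the Remark following Lemma~\ref{Lemma: EQ} ($A$ successful $\Leftrightarrow \varnothing \neq L(V_A/G)-L(V/G)\subseteq L_{dmg}$) and the simplified inductive definition of $L(V_A/G)$ in Lemma~\ref{lemma: DOL}, these three implications give the full equivalence. Throughout I would exploit two structural facts specific to the normal/normal setting. First, by Lemma~\ref{lemma: DOL} a string of $L(V_A/G)$ can only be extended from a prefix already lying in $L(V/G)$, so every $t\in L(V_A/G)-L(V/G)$ has the form $t=s\sigma$ with $s\in L(V/G)\cap L(V_A/G)$ and $\sigma\in V_A(P_o(s))-V(P_o(s))$; since $A(\cdot)\subseteq\Sigma_{c,A}$, this forces $\sigma\in\Sigma_{c,A}$ and $\sigma\in A(\hat{P}_{o,A}^{V}(P_o(s)))$. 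Second, the control command $V(w)$ is recoverable as the second coordinate of the last symbol of $\hat{P}_{o,A}^{V}(w)$, so $\hat{P}_{o,A}^{V}(P_o(s))=\hat{P}_{o,A}^{V}(P_o(s'))$ implies $V(P_o(s))=V(P_o(s'))$.

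The implication $(2)\Rightarrow(1)$ is immediate, since an enabling attacker is in particular an attacker. For $(3)\Rightarrow(2)$, given a witness $(s,\sigma)$ I would define the enabling attacker that on every observation enables all attackable events the supervisor enables, and \emph{additionally} enables $\sigma$ exactly on the observation $\hat{P}_{o,A}^{V}(P_o(s))$: formally $A(\hat{P}_{o,A}^{V}(w))=V(w)\cap\Sigma_{c,A}$ when $\hat{P}_{o,A}^{V}(w)\neq\hat{P}_{o,A}^{V}(P_o(s))$, and $A(\hat{P}_{o,A}^{V}(w))=(V(w)\cap\Sigma_{c,A})\cup\{\sigma\}$ otherwise. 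This is well defined (by the second fact, $V(w)$ depends only on the observation) and enabling, so $L(V/G)\subseteq L(V_A/G)$, and one checks $V_A(w)=V(w)$ except when $\hat{P}_{o,A}^{V}(w)=\hat{P}_{o,A}^{V}(P_o(s))$, where $V_A(w)=V(w)\cup\{\sigma\}$. Using the first fact, every $t\in L(V_A/G)-L(V/G)$ must equal $s'\sigma$ with $s'\in L(V/G)$, $\hat{P}_{o,A}^{V}(P_o(s'))=\hat{P}_{o,A}^{V}(P_o(s))$ and $s'\sigma\in L(G)$; condition (b) then yields $s'\sigma\in L_{dmg}$, giving $L(V_A/G)-L(V/G)\subseteq L_{dmg}$, while applying $\sigma$ after $s$ together with condition (a) shows this set is nonempty.

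For the hard direction $(1)\Rightarrow(3)$, I would take any successful attacker $A$, pick $t\in L(V_A/G)-L(V/G)$, and decompose it by the first fact as $t=s\sigma$; then (a) holds because $t\in L(V_A/G)-L(V/G)\subseteq L_{dmg}$. The crux is (b): given $s'\in L(V/G)$ with $\hat{P}_{o,A}^{V}(P_o(s'))=\hat{P}_{o,A}^{V}(P_o(s))$ and $s'\sigma\in L(G)$, I must show $s'\sigma\in L_{dmg}$. The natural route is to prove $s'\sigma\in L(V_A/G)-L(V/G)$ and invoke the characterization of success, but this needs $s'\in L(V_A/G)$, which is \emph{not} automatic, since a general attacker may disable attackable events and thereby prune paths to $s'$. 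The key claim I would establish is: \emph{if $s\in L(V_A/G)$, $s'\in L(V/G)$ and $\hat{P}_{o,A}^{V}(P_o(s))=\hat{P}_{o,A}^{V}(P_o(s'))$, then $s'\in L(V_A/G)$}. I would prove it by induction along the prefixes $v\tau\le s'$, splitting on $\tau$: if $\tau\in\Sigma_{uo}$ then normality gives $\Sigma_{uo}\subseteq\Sigma_{uc}\subseteq V(P_o(v))$ and $\tau\notin\Sigma_{c,A}$, so $\tau\in V_A(P_o(v))$; if $\tau\in\Sigma_o-\Sigma_{c,A}$ then $s'\in L(V/G)$ already enables $\tau$ and $\tau\notin\Sigma_{c,A}$, so again $\tau\in V_A(P_o(v))$; and if $\tau\in\Sigma_{c,A}\subseteq\Sigma_{o,A}$ then $\tau$ is observable to the attacker, so the equal observations force this attackable event and its position in the observation sequence to coincide with those of $s$, whence the prefix of $s$ reaching the matching occurrence lies in $L(V_A/G)$ and enables $\tau$; as the attack decision depends only on the (equal) observation prefix, $\tau\in A(\hat{P}_{o,A}^{V}(P_o(v)))\subseteq V_A(P_o(v))$. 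With $s'\in L(V_A/G)$ in hand, Lemma~\ref{lemma: DOL} gives $s'\sigma\in L(V_A/G)$, and $\sigma\notin V(P_o(s'))=V(P_o(s))$ gives $s'\sigma\notin L(V/G)$, so $s'\sigma\in L_{dmg}$.

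The main obstacle is exactly this reachability-transfer claim in $(1)\Rightarrow(3)$: arbitrary attackers are not monotone relative to $L(V/G)$, so one cannot simply assert $L(V/G)\subseteq L(V_A/G)$ as one does for enabling attackers. Its proof rests on Assumption~\ref{assump: normalattack} ($\Sigma_{c,A}\subseteq\Sigma_{o,A}$), which ensures every attackable event is observable to the attacker and hence that equal attacker observations force the attackable events (and their positions) to agree between $s$ and $s'$; this is precisely what lets the enablement of $\sigma$-type events witnessed along $s$ be transported to $s'$. Once the claim is available, the remaining verifications are routine bookkeeping with the inductive definitions.
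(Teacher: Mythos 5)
Your proposal is correct and follows essentially the same route as the paper: the same cycle of implications, the same single-observation enabling attacker for $(3)\Rightarrow(2)$, and for $(1)\Rightarrow(3)$ the same key reachability-transfer claim ($s'\in L(V_A/G)$ whenever $s\in L(V_A/G)$, $s'\in L(V/G)$ and $\hat{P}_{o,A}^{V}(P_o(s))=\hat{P}_{o,A}^{V}(P_o(s'))$), resting on Assumption~\ref{assump: normalattack} exactly as you identify. The only (stylistic) difference is that you prove that claim by direct induction on the prefixes of $s'$, whereas the paper argues by contradiction from the first prefix $s_1'\sigma'$ of $s'$ pruned by the attacker and locates the matching attackable, attacker-observable event along $s$; both hinge on the identical position-matching argument.
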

\begin{proof}
Clearly, 2) implies 1). We shall first show 1) implies 3).

Suppose 1) holds. Let $A$ be a successful attacker on $(G, V)$ w.r.t. $(\Sigma_{c, A}, \Sigma_{o, A})$ and $L_{dmg}$. By definition, we have $\varnothing \neq L(V_A/G)-L(V/G) \subseteq L_{dmg}$. Let us now fix some string $s_0$ in $L(V_A/G)-L(V/G)$, the existence of which is ensured since $L(V_A/G)-L(V/G) \neq \varnothing$. It is not difficult to see that $s_0=s\sigma$ for some $s \in L(V/G) \cap L(V_A/G)$ and $\sigma \in  A(\hat{P}_{o, A}^{V}(P_o(s)))-V(P_o(s)) \subseteq \Sigma_{c, A}$. Since $s\sigma=s_0 \in L(V_A/G)-L(V/G) \subseteq L_{dmg}$, we conclude that 3. a) holds. In the following, we shall show that 3. b) also holds.

Let $s'$ be any string in $L(V/G)$. Suppose that $\hat{P}_{o, A}^{V}(P_o(s'))=\hat{P}_{o, A}^{V}(P_o(s))$ and $s'\sigma \in L(G)$. We need to show that $s'\sigma \in L_{dmg}$. 

 It is not difficult to see that $\hat{P}_{o, A}^{V}(P_o(s'))=\hat{P}_{o, A}^{V}(P_o(s))$ implies $V(P_o(s'))=V(P_o(s))$. Indeed, $\hat{P}_{o, A}^{V}(P_o(s'))=\hat{P}_{o, A}^{V}(P_o(s))$ implies $last(\hat{P}_{o, A}^{V}(P_o(s')))=last(\hat{P}_{o, A}^{V}(P_o(s)))$, where $last$ denotes the operation that computes the last element in a string, and by the definition of $\hat{P}_{o, A}^{V}$ we have 
\begin{center}
$last(\hat{P}_{o, A}^{V}(P_o(s')))=(P_{o, A}(last(P_o(s'))), V(P_o(s')))$ and $last(\hat{P}_{o, A}^{V}(P_o(s)))=(P_{o, A}(last(P_o(s))), V(P_o(s)))$
\end{center}
We shall only need to show that $s' \in L(V_A/G)$. Then, it follows that $s'\sigma \in L(V_A/G)-L(V/A) \subseteq L_{dmg}$, since $s'\sigma \in L(G)$ and $\sigma \in A(\hat{P}_{o, A}^{V}(P_o(s)))=A(\hat{P}_{o, A}^{V}(P_o(s'))) \subseteq V_A(P_o(s'))$ and $\sigma \notin V(P_o(s))=V(P_o(s'))$. 

Suppose, on the contrary, that $s' \notin L(V_A/G)$, then it must be the case that $s'=s_1'\sigma's_2'$ for some $s_1', s_2' \in \Sigma^*$ and $\sigma' \in \Sigma$ such that $s_1' \in L(V/G) \cap L(V_A/G)$ and $s_1'\sigma' \in L(V/G)-L(V_A/G)$. Thus, we have $\sigma' \in V(P_o(s_1'))$ and $\sigma' \notin V_A(P_o(s_1'))=(V(P_o(s_1'))-\Sigma_{c, A}) \cup  A(\hat{P}_{o, A}^{V}(P_o(s_1')))$. It follows that $\sigma' \in V(P_o(s_1')) \cap \Sigma_{c, A}-A(\hat{P}_{o, A}^{V}(P_o(s_1'))) \subseteq \Sigma_{c, A}$. Then, by Assumption~\ref{assump: normalattack}, we have $\sigma'\in \Sigma_{o, A}$. It is immediate that $\hat{P}_{o, A}^{V}(P_o(s_1'))$ is a strict prefix of $\hat{P}_{o, A}^{V}(P_o(s'))=\hat{P}_{o, A}^{V}(P_o(s))$, since $s_1'\sigma' \leq s'$ and $\sigma' \in \Sigma_{o, A}$. It follows that the set
\begin{center}
$D=\{s_1 \in \Sigma^* \mid s_1\leq s, \hat{P}_{o, A}^{V}(P_o(s_1))=\hat{P}_{o, A}^{V}(P_o(s_1'))\}$
\end{center} is non-empty. Let $s^0_{1}$ be the longest string in $D$. Then, $\hat{P}_{o, A}^{V}(P_o(s^0_1))=\hat{P}_{o, A}^{V}(P_o(s_1'))$ is a strict prefix of $\hat{P}_{o, A}^{V}(P_o(s))$ and it follows that $s\neq s^0_1$. Since $s^0_1 \leq s$ by the definition of $D$, we must have $s^0_{1}<s$. It follows that $s=s^0_{1}\sigma''s^0_2$ for some $\sigma'' \in \Sigma$ and $s^0_2 \in \Sigma^*$. We first observe that $\sigma'' \in \Sigma_o$, since otherwise $s^0_1\sigma''$ would be another string in $D$ and it is longer than $s^0_1$, which contradicts the fact that $s^0_1$ is the longest string in $D$. Next, we show that $\sigma'' \in \Sigma_{o, A}$. Otherwise, we have $\sigma'' \in \Sigma_o-\Sigma_{o, A}$ and  the following.
\begin{enumerate}
\item [i)] $\hat{P}_{o, A}^{V}(P_o(s^0_1))=\hat{P}_{o, A}^{V}(P_o(s_1'))$
\item [ii)] $\hat{P}_{o, A}^{V}(P_o(s^0_1\sigma''))=\hat{P}_{o, A}^{V}(P_o(s^0_1))(\epsilon, V(P_o(s^0_1\sigma'')))$
\item [iii)] $\hat{P}_{o, A}^{V}(P_o(s_1'\sigma'))=\hat{P}_{o, A}^{V}(P_o(s_1))(\sigma', V(P_o(s_1'\sigma')))$
\end{enumerate}
It follows that $\hat{P}_{o, A}^{V}(P_o(s^0_1\sigma'')) \neq \hat{P}_{o, A}^{V}(P_o(s_1'\sigma'))$ and $|\hat{P}_{o, A}^{V}(P_o(s^0_1\sigma''))|=|\hat{P}_{o, A}^{V}(P_o(s_1'\sigma'))|$. Since $s^0_1\sigma''$ is a prefix of $s$ and $s_1'\sigma'$ is a prefix of $s'$, we conclude that $\hat{P}_{o, A}^{V}(P_o(s^0_1\sigma''))$ is a prefix of $\hat{P}_{o, A}^{V}(P_o(s))$ and $\hat{P}_{o, A}^{V}(P_o(s_1'\sigma'))$ is a prefix of $\hat{P}_{o, A}^{V}(P_o(s'))$; then, the prefix of $\hat{P}_{o, A}^{V}(P_o(s))$ of length $|\hat{P}_{o, A}^{V}(P_o(s^0_1\sigma''))|$ is different from the prefix of $\hat{P}_{o, A}^{V}(P_o(s'))$ of the same length. However, this contradicts the supposition that $\hat{P}_{o, A}^{V}(P_o(s'))=\hat{P}_{o, A}^{V}(P_o(s))$. That is, we must have $\sigma'' \in \Sigma_{o, A}$. Moreover, we must have $\sigma''=\sigma'$; otherwise, the same contradiction can be reached by comparing $\hat{P}_{o, A}^{V}(P_o(s^0_1\sigma''))$ and $\hat{P}_{o, A}^{V}(P_o(s_1'\sigma'))$. Since $\hat{P}_{o, A}^{V}(P_o(s^0_1))=\hat{P}_{o, A}^{V}(P_o(s_1'))$, we have that $\sigma' \notin A(\hat{P}_{o, A}^{V}(P_o(s_1')))=A(\hat{P}_{o, A}^{V}(P_o(s^0_1)))$. Thus, we have $\sigma'' \notin A(\hat{P}_{o, A}^{V}(P_o(s^0_1)))$ and $\sigma''=\sigma' \in \Sigma_{c, A}$. Then, we can conclude that $\sigma'' \notin V_A(P_o(s^0_1))$. However, this contradicts with the fact that $s^0_1\sigma'' \in L(V_A/G)$. Thus, we conclude that indeed $s' \in L(V_A/G)$. It follows that 3. b) holds. 

Finally, we shall show that 3) implies 2). 

Now, suppose 3) holds. Let $s \in L(V/G)$ and $\sigma \in \Sigma_{c, A}$ such that 1) $s\sigma \in L_{dmg}$, and 2) for any string $s' \in L(V/G)$, $\hat{P}_{o, A}^{V}(P_o(s))=\hat{P}_{o, A}^{V}(P_o(s'))$ and $s'\sigma \in L(G)$ together implies $s'\sigma \in L_{dmg}$. It is clear that $\sigma \notin V(P_o(s))$. Let $A: \hat{P}_{o, A}^{V}(P_o(L(V/G))) \rightarrow \Delta \Gamma$ be an attacker on $(G, V)$ w.r.t. $(\Sigma_{c, A}, \Sigma_{o, A})$, where 
\begin{enumerate}
\item $A(\hat{P}_{o, A}^{V}(P_o(s)))=(V(P_o(s)) \cap \Sigma_{c, A}) \cup \{\sigma\}$ 
\item for any $w \in P_o(L(V/G))$ such that $\hat{P}_{o, A}^{V}(w)\neq \hat{P}_{o, A}^{V}(P_o(s))$, $A(\hat{P}_{o, A}^{V}(w))=V(w) \cap \Sigma_{c, A}$. 
\end{enumerate}
Intuitively, the attacker $A$ performs no actual attack until it observes $\hat{P}_{o, A}^{V}(P_o(s))$, at which point the attacker enables the disabled attackable event $\sigma$. It is immediate that $A$ is a well-defined actuator attacker, since for any $w \in P_o(L(V/G))$ we have that $V(w)$ is the second component of $last(\hat{P}_{o, A}^{V}(w))$. That is, for any $w_1, w_2 \in P_o(L(V/G))$ such that $\hat{P}_{o, A}^{V}(w_1)=\hat{P}_{o, A}^{V}(w_2) \neq \hat{P}_{o, A}^{V}(P_o(s))$, we have that $A(\hat{P}_{o, A}^{V}(w_1))=A(\hat{P}_{o, A}^{V}(w_2))$, since $V(w_1)=V(w_2)$. It is clear that $A$ is an enabling actuator attacker by definition. In the rest, we show that $\varnothing \neq L(V_A/G)-L(V/G) \subseteq L_{dmg}$, that is, $A$ is a successful attacker on $(G, V)$ w.r.t. $(\Sigma_{c, A}, \Sigma_{o, A})$ and $L_{dmg}$.


Since $A$ is an enabling attacker, we have $L(V/G) \subseteq L(V_A/G)$. Since $s \in L(V/G)$ and $\sigma \in A(\hat{P}_{o, A}^{V}(P_o(s))) \subseteq V_A(P_o(s))$ and $s\sigma \in L_{dmg} \subseteq L(G)$, we have $s\sigma \in L(V_A/G)$ and $s\sigma \notin L(V/G)$, i.e., $s\sigma \in L(V_A/G)-L(V/G)$. We conclude that $L(V_A/G)-L(V/G)\neq \varnothing$. In the rest, we only need to show that $L(V_A/G)-L(V/G) \subseteq L_{dmg}$. Let $s''$ be any string in $L(V_A/G)-L(V/G)$. By the definition of $L(V_A/G)$ and Lemma~\ref{lemma: DOL}, we conclude that $s''=s'\sigma'$ for some $s' \in L(V/G)$ and some $\sigma' \in \Sigma_{c, A}$, where we have $\sigma' \notin V(P_o(s'))$ and $\sigma' \in A(\hat{P}_{o, A}^{V}(P_o(s'))) \subseteq V_A(P_o(s'))$. We then conclude that $V(P_o(s')) \neq V_A(P_o(s'))$ and also $\sigma' \in V_A(P_o(s'))-V(P_o(s'))$. Now, from the definition of $A$, we observe that, for any $w \in P_o(L(V/G))$, $V_A(w) \neq V(w)$ only if\footnote{$V_A(w)=(V(w)-\Sigma_{c, A}) \cup A(\hat{P}_{o, A}^{V}(w)) \neq V(w)$ only if $A(\hat{P}_{o, A}^{V}(w)) \neq V(w) \cap \Sigma_{c, A}$.} $A(\hat{P}_{o, A}^{V}(w)) \neq V(w) \cap \Sigma_{c, A}$ and $\hat{P}_{o, A}^{V}(w)=\hat{P}_{o, A}^{V}(P_o(s))$ and $V_A(w)=V(w) \cup \{\sigma\}$.

 Thus, we conclude that $V_A(P_o(s'))=V(P_o(s')) \cup \{\sigma\}$ and $\hat{P}_{o, A}^{V}(P_o(s'))=\hat{P}_{o, A}^{V}(P_o(s))$; the only possibility is $\sigma'=\sigma$. Thus, we have that $s' \in L(V/G)$, $\hat{P}_{o, A}^{V}(P_o(s))=\hat{P}_{o, A}^{V}(P_o(s'))$ and $s'\sigma=s'\sigma'=s'' \in L(G)$, which implies $s''=s'\sigma \in L_{dmg}$ by the supposition. That is, we have shown that $L(V_A/G)-L(V/G) \subseteq L_{dmg}$. That is, 2) holds.
\end{proof}
Based on Theorem~\ref{theorem:existence}, we are able to identify a notion of attackability and a notion of an attack pair for ensuring the existence of a successful (enabling) attacker, in the case of normal supervisors.
\begin{definition} 
Let the plant $G$ and the normal supervisor $V$ be given. $(G, V)$ is said to be {\em attackable} w.r.t. $(\Sigma_{c, A}, \Sigma_{o, A})$ and $L_{dmg}$ if there exists a string $s \in L(V/G)$ and an attackable event $\sigma \in \Sigma_{c, A}$, such that 
\begin{enumerate}
\item $s\sigma \in L_{dmg}$
\item for any $s' \in L(V/G)$, $\hat{P}_{o, A}^{V}(P_o(s))=\hat{P}_{o, A}^{V}(P_o(s'))$ and $s'\sigma \in L(G)$ together implies $s'\sigma \in L_{dmg}$. 
\end{enumerate}
Moreover, each pair $(s, \sigma) \in L(V/G) \times \Sigma_{c, A}$ that satisfies Conditions 1) and 2) is said to be an {\em attack pair} for $(G, V)$ w.r.t. $(\Sigma_{c, A}, \Sigma_{o, A})$ and $L_{dmg}$. \qed
\end{definition}
Thus, Theorem~\ref{theorem:existence} states that there exists a successful (enabling) attacker on $(G, V)$ w.r.t. $(\Sigma_{c, A}, \Sigma_{o, A})$ and $L_{dmg}$ iff $(G, V)$ is attackable w.r.t. $(\Sigma_{c, A}, \Sigma_{o, A})$ and $L_{dmg}$. Furthermore, based on the proof of Theorem~\ref{theorem:existence}, a successful (enabling) attacker $A$ on $(G, V)$ w.r.t. $(\Sigma_{c, A}, \Sigma_{o, A})$ and $L_{dmg}$ can be readily synthesized if an attack pair for $(G, V)$ w.r.t. $(\Sigma_{c, A}, \Sigma_{o, A})$ and $L_{dmg}$ can be obtained. We now make the following remark.  
\begin{remark}
Given any two supervisors $V_1, V_2$ on $G$ w.r.t. $(\Sigma_c, \Sigma_o)$, $(G, V_1)$ and $(G, V_2)$ are said to be {\em equi-attackable} w.r.t. $(\Sigma_{c, A}, \Sigma_{o, A})$ and $L_{dmg}$ if $(G, V_1)$ is attackable w.r.t. $(\Sigma_{c, A}, \Sigma_{o, A})$ and $L_{dmg}$ iff $(G, V_2)$ is attackable w.r.t. $(\Sigma_{c, A}, \Sigma_{o, A})$ and $L_{dmg}$. It is worth noting that $L(V_1/G)=L(V_2/G)$ does not imply the equi-attackability of $(G, V_1)$ and $(G, V_2)$, since in general $V_1 \neq V_2$ and thus $\hat{P}_{o, A}^{V_1} \neq \hat{P}_{o, A}^{V_2}$.

\end{remark}
\subsection{Characterization of the Supremal Successful Enabling Attacker}
\label{sub: CSSEA}
From now on, we shall mainly focus on the class of enabling attackers on $(G, V)$ w.r.t. $(\Sigma_{c, A}, \Sigma_{o, A})$. It is of interest to obtain a successful enabling attacker $A$ that generates as many damaging strings in $L_{dmg}$ as possible in $L(V_A/G)$. Now, let $AP_{dmg}$ denote the set of all the attack pairs for $(G, V)$ w.r.t. $(\Sigma_{c, A}, \Sigma_{o, A})$ and $L_{dmg}$. For any $(s, \sigma) \in AP_{dmg}$, we define $En(s, \sigma):=\{s'\sigma \in L(G) \mid \hat{P}_{o, A}^{V}(P_o(s))=\hat{P}_{o, A}^{V}(P_o(s')), s' \in L(V/G)\}$. Intuitively, $En(s, \sigma)$ denotes the set of strings in $L(G)$ that are one-step $\sigma$-extensions $s'\sigma$ of strings $s'$ in $L(V/G)$, where $s'$ and $s$ are attacker observation equivalent strings, i.e., $\hat{P}_{o, A}^{V}(P_o(s))=\hat{P}_{o, A}^{V}(P_o(s'))$. 
It is clear that $s\sigma \in En(s, \sigma) \subseteq L_{dmg}$ holds. We now have the following characterization of attacked closed-behavior under successful enabling attackers. 
\begin{proposition}
\label{proposition: char}
Let the plant $G$ and the normal supervisor $V$ be given. Suppose $AP_{dmg} \neq \varnothing$. Then, for any enabling attacker $A$ on $(G, V)$ w.r.t. $(\Sigma_{c, A}, \Sigma_{o, A})$, $A$ is successful on $(G, V)$ w.r.t. $(\Sigma_{c, A}, \Sigma_{o, A})$ and $L_{dmg}$ iff $L(V_A/G)=L(V/G) \cup \bigcup_{(s, \sigma) \in B}En(s, \sigma)$ for some non-empty $B \subseteq AP_{dmg}$.
\end{proposition}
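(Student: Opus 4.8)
The plan is to reduce everything to two facts already available for enabling attackers: that $L(V/G) \subseteq L(V_A/G)$, and (by the Remark following Lemma~\ref{Lemma: EQ}) that $A$ is successful iff $\varnothing \neq L(V_A/G)-L(V/G) \subseteq L_{dmg}$. The structural observation I would establish up front, using Lemma~\ref{lemma: DOL}, is that every string in $L(V_A/G)-L(V/G)$ is a one-step extension: if $t \in L(V_A/G)-L(V/G)$ then $t=s'\sigma$ with $s' \in L(V/G)$ and $\sigma \in V_A(P_o(s'))-V(P_o(s')) \subseteq \Sigma_{c,A}$. This holds because the inductive clause of Lemma~\ref{lemma: DOL} can only append an event to a string already in $L(V/G)$, so once a string leaves $L(V/G)$ it cannot be extended further; and $s' \in L(V/G)$ together with $\sigma \notin V(P_o(s'))$ forces $s'\sigma \notin L(V/G)$. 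Writing $D:=L(V_A/G)-L(V/G)$, the proof then amounts to exhibiting $D=\bigcup_{(s,\sigma)\in B}En(s,\sigma)$ for a suitable non-empty $B\subseteq AP_{dmg}$.

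For the ``if'' direction, suppose $L(V_A/G)=L(V/G)\cup\bigcup_{(s,\sigma)\in B}En(s,\sigma)$ with $\varnothing\neq B\subseteq AP_{dmg}$. Since $s\sigma\in En(s,\sigma)\subseteq L_{dmg}$ for every attack pair and $L_{dmg}\cap L(V/G)=\varnothing$, picking any $(s,\sigma)\in B$ gives $s\sigma\in L(V_A/G)-L(V/G)$, so $D\neq\varnothing$. For containment, $D=\bigl(\bigcup_{(s,\sigma)\in B}En(s,\sigma)\bigr)-L(V/G)\subseteq\bigcup_{(s,\sigma)\in B}En(s,\sigma)\subseteq L_{dmg}$, using $En(s,\sigma)\subseteq L_{dmg}$ for each attack pair. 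Hence $\varnothing\neq D\subseteq L_{dmg}$ and $A$ is successful.

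For the ``only if'' direction, assume $A$ is successful, so $\varnothing\neq D\subseteq L_{dmg}$. I would take $B:=\{(s',\sigma)\mid s'\in L(V/G),\ \sigma\in\Sigma_{c,A},\ s'\sigma\in D\}$, which is non-empty by the structural observation, and then check two things. First, $B\subseteq AP_{dmg}$: condition~1 is immediate since $s'\sigma\in D\subseteq L_{dmg}$; condition~2 requires that any $s''\in L(V/G)$ with $\hat{P}_{o,A}^{V}(P_o(s''))=\hat{P}_{o,A}^{V}(P_o(s'))$ and $s''\sigma\in L(G)$ satisfies $s''\sigma\in L_{dmg}$. This I would obtain exactly as in the proof of Theorem~\ref{theorem:existence}: the observation equality forces $V(P_o(s''))=V(P_o(s'))$ and $A(\hat{P}_{o,A}^{V}(P_o(s'')))=A(\hat{P}_{o,A}^{V}(P_o(s')))$, so $\sigma\in V_A(P_o(s''))-V(P_o(s''))$, whence $s''\sigma\in D\subseteq L_{dmg}$ by Lemma~\ref{lemma: DOL}. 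Second, $\bigcup_{(s,\sigma)\in B}En(s,\sigma)=D$: the inclusion ``$\subseteq$'' uses the same observation-transfer argument to show $En(s',\sigma)\subseteq D$ whenever $(s',\sigma)\in B$, while ``$\supseteq$'' is trivial since each $s'\sigma\in D$ lies in $En(s',\sigma)$. Combining with $L(V/G)\subseteq L(V_A/G)$ yields $L(V_A/G)=L(V/G)\cup\bigcup_{(s,\sigma)\in B}En(s,\sigma)$.

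The main obstacle is the verification that $B\subseteq AP_{dmg}$ and, dually, that $En(s',\sigma)$ stays inside $D$ rather than overshooting it; both hinge on the single mechanism that attacker-observation equivalence (equality of $\hat{P}_{o,A}^{V}$) simultaneously pins down the supervisor's command $V(\cdot)$ and the attacker's decision $A(\cdot)$. This is precisely the argument already carried out within Theorem~\ref{theorem:existence}, so I expect to invoke it rather than redo it; the remaining work (the one-step-extension structure and the set algebra for $D$) is routine.
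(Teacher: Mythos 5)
Your proposal is correct and follows essentially the same route as the paper's proof: both directions rest on the one-step-extension structure from Lemma~\ref{lemma: DOL}, the ``if'' direction uses $En(s,\sigma)\subseteq L_{dmg}$ together with $L_{dmg}\cap L(V/G)=\varnothing$, and the ``only if'' direction defines the same witness set $B=\{(s,\sigma)\mid s\sigma\in L(V_A/G)-L(V/G)\}$, invokes the argument of Theorem~\ref{theorem:existence} for $B\subseteq AP_{dmg}$, and proves $\bigcup_{(s,\sigma)\in B}En(s,\sigma)=L(V_A/G)-L(V/G)$ by the same two inclusions (with the enabling property supplying $s''\in L(V_A/G)$ where needed).
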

\begin{proof}
Suppose $AP_{dmg} \neq \varnothing$. Then, we know that $(G, V)$ is attackable w.r.t. $(\Sigma_{c, A}, \Sigma_{o, A})$ and $L_{dmg}$. Let $A$ be any enabling attacker on $(G, V)$ w.r.t. $(\Sigma_{c, A}, \Sigma_{o, A})$. We have $L(V/G) \subseteq L(V_A/G)$.

Suppose $L(V_A/G)=L(V/G) \cup \bigcup_{(s, \sigma) \in B}En(s, \sigma)$ for some non-empty $B \subseteq AP_{dmg}$. Let $(s_0, \sigma_0) \in B \subseteq AP_{dmg}$ be an arbitrarily chosen attack pair on $(G, V)$ w.r.t. $(\Sigma_{c, A}, \Sigma_{o, A})$ and $L_{dmg}$, the existence of which is ensured since $B$ is non-empty. We have $s_0\sigma_0 \in L_{dmg}$ and $s_0\sigma_0 \in En(s_0, \sigma_0) \subseteq L(V_A/G)$, by the supposition that $L(V_A/G)=L(V/G) \cup \bigcup_{(s, \sigma) \in B}En(s, \sigma)$. Thus, $s_0\sigma_0 \in L(V_A/G)-L(V/G)$ and $L(V_A/G)-L(V/G) \neq \varnothing$. We have $L(V_A/G)=L(V/G) \cup \bigcup_{(s, \sigma) \in B}En(s, \sigma) \subseteq L(V/G) \cup L_{dmg}$. Thus, $L(V_A/G)-L(V/G) \subseteq L_{dmg}$. Thus, $A$ is successful on $(G, V)$ w.r.t. $(\Sigma_{c, A}, \Sigma_{o, A})$ and $L_{dmg}$.

Now, suppose $A$ is successful on $(G, V)$ w.r.t. $(\Sigma_{c, A}, \Sigma_{o, A})$ and $L_{dmg}$, that is, $\varnothing \neq L(V_A/G)-L(V/G) \subseteq L_{dmg}$. Let $s_0$ be any string in $L(V_A/G)-L(V/G)$. It is clear that $s_0=s\sigma$ for some $s \in L(V/G)$ and $\sigma \in \Sigma_{c, A}$. From the proof of Theorem~\ref{theorem:existence}, $(s, \sigma)$ is an attack pair for $(G, V)$ w.r.t. $(\Sigma_{c, A}, \Sigma_{o, A})$ and $L_{dmg}$. Now, Let $B=\{(s, \sigma) \in L(V/G) \times \Sigma_{c, A} \mid s\sigma \in L(V_A/G)-L(V/G)\}$. It is clear that $\varnothing \neq B \subseteq AP_{dmg}$. We now show that $L(V_A/G)-L(V/G)=\bigcup_{(s, \sigma) \in B}En(s, \sigma)$. It then follows from $L(V/G) \subseteq L(V_A/G)$ that $L(V_A/G)=L(V/G) \cup \bigcup_{(s, \sigma) \in B}En(s, \sigma)$.

It is clear that $L(V_A/G)-L(V/G) \subseteq \bigcup_{(s, \sigma) \in B}En(s, \sigma)$, since, for any $s_0=s\sigma \in L(V_A/G)-L(V/G)$, we have $(s, \sigma) \in B$ and $s_0=s\sigma \in En(s, \sigma) \subseteq \bigcup_{(s, \sigma) \in B}En(s, \sigma)$.

On the other hand, let $s_0$ be any string in $\bigcup_{(s, \sigma) \in B}En(s, \sigma)$. By definition, there exists some $(s, \sigma) \in B$ such that $s_0 \in En(s, \sigma)$. We have $s\sigma \in L(V_A/G)-L(V/G)$ and $s_0=s'\sigma \in L(G)$ for some $s' \in L(V/G)$ such that $\hat{P}_{o, A}^{V}(P_o(s))=\hat{P}_{o, A}^{V}(P_o(s'))$. It is clear that $\sigma \in A(\hat{P}_{o, A}^{V}(P_o(s)))=A(\hat{P}_{o, A}^{V}(P_o(s'))) \subseteq V_A(P_o(s'))$ and thus $s_0=s'\sigma \in L(V_A/G)$. $\hat{P}_{o, A}^{V}(P_o(s))=\hat{P}_{o, A}^{V}(P_o(s'))$ implies $V(P_o(s))=V(P_o(s'))$. From $\sigma \notin V(P_o(s))$, we conclude that $\sigma \notin V(P_o(s'))$ and thus $s_0=s'\sigma \notin L(V/G)$. Then, we have $s_0 \in L(V_A/G)-L(V/G)$. That is, $\bigcup_{(s, \sigma) \in B}En(s, \sigma) \subseteq L(V_A/G)-L(V/G)$. 
\end{proof}
It follows that the largest possible attacked closed-behavior under successful enabling attackers is $L(V/G) \cup \bigcup_{(s, \sigma) \in AP_{dmg}}En(s, \sigma)$, by Proposition~\ref{proposition: char}, if $AP_{dmg}$ is non-empty. Now, we show that it is realized by the enabling attacker $A^{sup}: \hat{P}_{o, A}^{V}(P_o(L(V/G))) \rightarrow \Delta \Gamma$ such that $A^{sup}(\hat{P}_{o, A}^{V}(P_o(s)))=(V(P_o(s)) \cap \Sigma_{c, A}) \cup I(\hat{P}_{o, A}^{V}(P_o(s)))$, for any $s \in L(V/G)$, where $I(\hat{P}_{o, A}^{V}(P_o(s))):=\{\sigma \in \Sigma_{c, A} \mid \exists s' \in L(V/G), \hat{P}_{o, A}^{V}(P_o(s))=\hat{P}_{o, A}^{V}(P_o(s')) \wedge (s', \sigma) \in AP_{dmg}\}$. It is clear that $I(\hat{P}_{o, A}^{V}(P_o(s))) \cap V(P_o(s))=\varnothing$ by the definition of $I(\hat{P}_{o, A}^{V}(P_o(s)))$. Intuitively, after the closed-loop system generates a string $s \in L(V/G) \subseteq L(V_{A^{sup}}/G)$, the attacker $A^{sup}$ enables an attackable event\footnote{The attackable events in $V(P_o(s)) \cap \Sigma_{c, A}$ are always enabled by $A^{sup}$ since $A^{sup}$ is enabling.} $\sigma \in \Sigma_{c, A}-V(P_o(s))$ iff there exists a string $s' \in L(V/G)$ that is observationally equivalent to $s$ from the attacker's point of view, i.e., $\hat{P}_{o, A}^{V}(P_o(s))=\hat{P}_{o, A}^{V}(P_o(s'))$, such that $(s', \sigma)$ is an attack pair. We first show that $A^{sup}$ is a well-defined attacker. That is, for any two strings $s, s'' \in L(V/G)$ such that $\hat{P}_{o, A}^{V}(P_o(s))=\hat{P}_{o, A}^{V}(P_o(s''))$, we need to show that $A^{sup}(\hat{P}_{o, A}^{V}(P_o(s)))=A^{sup}(\hat{P}_{o, A}^{V}(P_o(s'')))$. 

\begin{lemma}
$A^{sup}$ is a well-defined attacker on $(G, V)$ w.r.t. $(\Sigma_{c, A}, \Sigma_{o, A})$. 
\end{lemma}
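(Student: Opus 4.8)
The plan is to verify directly that the value $A^{sup}(\hat{P}_{o, A}^{V}(P_o(s)))$ depends only on the observation string $\hat{P}_{o, A}^{V}(P_o(s))$ and not on the particular representative $s$ that produces it. Concretely, I would fix two strings $s, s'' \in L(V/G)$ satisfying $\hat{P}_{o, A}^{V}(P_o(s))=\hat{P}_{o, A}^{V}(P_o(s''))$ and show that $A^{sup}(\hat{P}_{o, A}^{V}(P_o(s)))=A^{sup}(\hat{P}_{o, A}^{V}(P_o(s'')))$, which is exactly the well-definedness requirement.

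Recalling that $A^{sup}(\hat{P}_{o, A}^{V}(P_o(s)))=(V(P_o(s)) \cap \Sigma_{c, A}) \cup I(\hat{P}_{o, A}^{V}(P_o(s)))$, I would split the verification into its two summands. The term $I(\hat{P}_{o, A}^{V}(P_o(s)))$ is, by its very definition, a function of the observation string $\hat{P}_{o, A}^{V}(P_o(s))$ alone; since this argument coincides for $s$ and $s''$ by hypothesis, the equality $I(\hat{P}_{o, A}^{V}(P_o(s)))=I(\hat{P}_{o, A}^{V}(P_o(s'')))$ holds trivially. Hence the only genuine obligation is to establish $V(P_o(s)) \cap \Sigma_{c, A}=V(P_o(s'')) \cap \Sigma_{c, A}$, for which it suffices to prove $V(P_o(s))=V(P_o(s''))$.

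For this last equality I would reuse the observation already isolated in the proof of Theorem~\ref{theorem:existence}, namely that $V(P_o(s))$ can be read off from $\hat{P}_{o, A}^{V}(P_o(s))$. Since $|\hat{P}_{o, A}^{V}(w)|=|w|$ for every $w \in P_o(L(V/G))$, equality of observations forces $P_o(s)$ and $P_o(s'')$ to have the same length; in the degenerate case $P_o(s)=P_o(s'')=\epsilon$ we get $V(P_o(s))=V(\epsilon)=V(P_o(s''))$ at once. When $P_o(s)\neq \epsilon$, the definition of $\hat{P}_{o, A}^{V}$ gives $last(\hat{P}_{o, A}^{V}(P_o(s)))=(P_{o, A}(last(P_o(s))), V(P_o(s)))$, so the second component of the final letter equals $V(P_o(s))$; equal observation strings have equal final letters, whence $V(P_o(s))=V(P_o(s''))$. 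Combining the two summands then yields $A^{sup}(\hat{P}_{o, A}^{V}(P_o(s)))=A^{sup}(\hat{P}_{o, A}^{V}(P_o(s'')))$.

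I do not expect a real obstacle here: the whole content is the remark (already present in Theorem~\ref{theorem:existence}) that $V(w)$ is recoverable as the second coordinate of the last letter of $\hat{P}_{o, A}^{V}(w)$, together with the trivial fact that $I$ is defined directly on observations. The only point that requires a touch of care is the empty-observation edge case, where there is no ``last letter''; I would dispatch it separately via the length-preservation property of $\hat{P}_{o, A}^{V}$ as indicated above.
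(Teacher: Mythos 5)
Your proposal is correct and follows essentially the same route as the paper: both reduce well-definedness to the facts that $I$ is by definition a function of the observation string alone and that $V(P_o(s))$ is recoverable as the second component of the last letter of $\hat{P}_{o, A}^{V}(P_o(s))$. Your explicit treatment of the empty-observation edge case is a small extra care the paper leaves implicit, but it does not change the argument.
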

\begin{proof}

Since $\hat{P}_{o, A}^{V}(P_o(s))=\hat{P}_{o, A}^{V}(P_o(s''))$ implies that  $V(P_o(s))=V(P_o(s''))$, we only need to show that $I(\hat{P}_{o, A}^{V}(P_o(s)))=I(\hat{P}_{o, A}^{V}(P_o(s'')))$. This is straightforward from $\hat{P}_{o, A}^{V}(P_o(s))=\hat{P}_{o, A}^{V}(P_o(s''))$ and the definitions of $I(\hat{P}_{o, A}^{V}(P_o(s)))$, $I(\hat{P}_{o, A}^{V}(P_o(s'')))$. 
\end{proof}
\begin{lemma}
\label{lemma: supre}
Let the plant $G$ and the normal supervisor $V$ be given. Suppose $AP_{dmg} \neq \varnothing$. Then, $L(V_{A^{sup}}/G)=L(V/G) \cup \bigcup_{(s, \sigma) \in AP_{dmg}}En(s, \sigma)$.
\end{lemma}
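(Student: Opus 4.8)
The plan is to prove the identity by double inclusion, after a small reduction. Since $A^{sup}$ is an enabling attacker we have $L(V/G) \subseteq L(V_{A^{sup}}/G)$, and since each $En(s,\sigma) \subseteq L_{dmg}$ is disjoint from $L(V/G)$, it suffices to establish the two inclusions $\bigcup_{(s,\sigma) \in AP_{dmg}} En(s,\sigma) \subseteq L(V_{A^{sup}}/G)$ and $L(V_{A^{sup}}/G) - L(V/G) \subseteq \bigcup_{(s,\sigma) \in AP_{dmg}} En(s,\sigma)$. Together with $L(V/G) \subseteq L(V_{A^{sup}}/G)$ these yield the claimed equality.

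For the first inclusion I would take an arbitrary $s_0 \in En(s,\sigma)$ with $(s,\sigma) \in AP_{dmg}$. By the definition of $En$, we have $s_0 = s'\sigma \in L(G)$ for some $s' \in L(V/G)$ with $\hat{P}_{o,A}^{V}(P_o(s')) = \hat{P}_{o,A}^{V}(P_o(s))$. Because $(s,\sigma) \in AP_{dmg}$ and $s'$ is attacker-observation-equivalent to $s$, the definition of $I$ gives $\sigma \in I(\hat{P}_{o,A}^{V}(P_o(s'))) \subseteq A^{sup}(\hat{P}_{o,A}^{V}(P_o(s'))) \subseteq V_{A^{sup}}(P_o(s'))$. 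Since $s' \in L(V/G) \subseteq L(V_{A^{sup}}/G)$, $s' \in L(V/G)$, $\sigma \in V_{A^{sup}}(P_o(s'))$ and $s'\sigma \in L(G)$, the inductive clause of Lemma~\ref{lemma: DOL} yields $s_0 = s'\sigma \in L(V_{A^{sup}}/G)$.

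The second inclusion is where the real work lies, and it hinges on the structure of $L(V_{A^{sup}}/G)$ forced by Lemma~\ref{lemma: DOL}. The crucial observation is that the production rule there requires the immediate predecessor to lie in $L(V/G)$; hence once a string leaves $L(V/G)$ it can never be extended, so every string in $L(V_{A^{sup}}/G) - L(V/G)$ is a one-step extension $s'\sigma$ with $s' \in L(V/G)$, $\sigma \in V_{A^{sup}}(P_o(s'))$ and $s'\sigma \in L(G)$. I would then note $\sigma \in \Sigma_{c,A}$, and since $s'\sigma \notin L(V/G)$ while $s' \in L(V/G)$ and $s'\sigma \in L(G)$, that $\sigma \notin V(P_o(s'))$; combining $V_{A^{sup}}(P_o(s')) = (V(P_o(s')) - \Sigma_{c,A}) \cup A^{sup}(\hat{P}_{o,A}^{V}(P_o(s')))$ with the definition $A^{sup}(\hat{P}_{o,A}^{V}(P_o(s'))) = (V(P_o(s')) \cap \Sigma_{c,A}) \cup I(\hat{P}_{o,A}^{V}(P_o(s')))$ forces $\sigma \in I(\hat{P}_{o,A}^{V}(P_o(s')))$. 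Unfolding the definition of $I$ produces a witness $s'' \in L(V/G)$ with $\hat{P}_{o,A}^{V}(P_o(s'')) = \hat{P}_{o,A}^{V}(P_o(s'))$ and $(s'', \sigma) \in AP_{dmg}$; since $s' \in L(V/G)$, $s'\sigma \in L(G)$ and $s'$ is observation-equivalent to $s''$, we obtain $s'\sigma \in En(s'', \sigma)$, completing the inclusion.

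I expect the main obstacle to be the justification of this one-step structure of $L(V_{A^{sup}}/G) - L(V/G)$: one must read the inductive clause of Lemma~\ref{lemma: DOL} carefully to see that leaving $L(V/G)$ is terminal, so that no reachable damaging string can be properly extended inside the attacked closed-loop behavior. Once this is in hand, the remaining reasoning is a bookkeeping exercise with the definitions of $A^{sup}$, $I$ and $En$, essentially the same case analysis already carried out in the proof of Theorem~\ref{theorem:existence}. Alternatively, one could invoke Proposition~\ref{proposition: char}, which already gives $L(V_{A^{sup}}/G) = L(V/G) \cup \bigcup_{(s,\sigma) \in B} En(s,\sigma)$ for some non-empty $B \subseteq AP_{dmg}$; it would then only remain to argue that $B$ may be taken to be all of $AP_{dmg}$, which is precisely the first inclusion established above.
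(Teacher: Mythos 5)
Your proposal is correct and follows essentially the same route as the paper's proof: both reduce the claim to the two inclusions between $L(V_{A^{sup}}/G)-L(V/G)$ and $\bigcup_{(s,\sigma)\in AP_{dmg}}En(s,\sigma)$, use the one-step-extension structure forced by Lemma~\ref{lemma: DOL} to deduce $\sigma\in I(\hat{P}_{o,A}^{V}(P_o(s')))$ and unfold $I$ to produce an attack-pair witness in one direction, and unfold $En$ and $I$ together with the inductive clause of Lemma~\ref{lemma: DOL} in the other. The only difference is expository: you flag the terminality of leaving $L(V/G)$ as the key point, which the paper treats as immediate.
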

\begin{proof}
It is clear that $L(V/G) \subseteq L(V_{A^{sup}}/G)$, since $A^{sup}$ is enabling. We need to show $L(V_{A^{sup}}/G)-L(V/G)=\bigcup_{(s, \sigma) \in AP_{dmg}}En(s, \sigma)$.

Let $s''$ be any string in $L(V_{A^{sup}}/G)-L(V/G)$. It is clear that $s''=s'\sigma$ for some $s' \in L(V/G) \cap L(V_{A^{sup}}/G)$ and $\sigma \in \Sigma_{c, A}$; moreover, $\sigma \notin V(P_o(s'))$ and $\sigma \in A^{sup}(\hat{P}_{o, A}^{V}(P_o(s'))) \subseteq V_{A^{sup}}(P_o(s'))$. Thus, we have $V(P_o(s'))\neq V_{A^{sup}}(P_o(s'))$ and $\sigma \in V_{A^{sup}}(P_o(s'))-V(P_o(s'))$. Now, from the definition of $A^{sup}$, we can conclude that $\sigma \in I(\hat{P}_{o, A}^{V}(P_o(s'))) \neq \varnothing$. Then, there exists some $s \in L(V/G)$ such that $\hat{P}_{o, A}^{V}(P_o(s'))=\hat{P}_{o, A}^{V}(P_o(s))$ and $(s, \sigma) \in L_{dmg}$. It is clear that $s'\sigma \in En(s, \sigma)$, by the definition of $En(s, \sigma)$. Then, we have $s''=s'\sigma \in En(s, \sigma) \subseteq \bigcup_{(s, \sigma) \in AP_{dmg}}En(s, \sigma)$.

Let $s'' \in \bigcup_{(s, \sigma) \in AP_{dmg}}En(s, \sigma)$. Then, there exists some $(s, \sigma) \in AP_{dmg}$ such that $s'' \in En(s, \sigma) \subseteq L_{dmg}$. Then, we have that $s''=s'\sigma$ for some $s' \in L(V/G)$ such that $\hat{P}_{o, A}^{V}(P_o(s'))=\hat{P}_{o, A}^{V}(P_o(s))$, from the definition of $En(s, \sigma)$. Since $(s, \sigma) \in AP_{dmg}$, by the definition of $A^{sup}$, we have $\sigma \in I(\hat{P}_{o, A}^{V}(P_o(s')))\subseteq A^{sup}(\hat{P}_{o, A}^{V}(P_o(s'))) \subseteq V_{A^{sup}}(P_o(s'))$. Thus, we have $s''=s'\sigma \in L(V_{A^{sup}}/G)-L(V/G)$. Then, $\bigcup_{(s, \sigma) \in AP_{dmg}}En(s, \sigma) \subseteq L(V_{A^{sup}}/G)-L(V/G)$.
\end{proof}
From Proposition~\ref{proposition: char} and Lemma~\ref{lemma: supre}, we know that $A^{sup}$ generates the largest possible attacked closed-behavior among the set of successful enabling attackers; for this reason, we refer to $A^{sup}$ as the supremal successful enabling attacker\footnote{However, there could be different attackers that also realize $L(V_{A^{sup}}/G)$.}. We now show that the attacked closed-behavior under $A^{sup}$ is the largest possible, even if we take into consideration those attackers that are not enabling. In other words, there is no loss of permissiveness when we focus on the class of enabling attackers. 
\begin{proposition}
Let the plant $G$ and the normal supervisor $V$ be given. Suppose $AP_{dmg} \neq \varnothing$. Then, for any successful attacker $A$ on $(G, V)$ w.r.t. $(\Sigma_{c, A}, \Sigma_{o, A})$ and $L_{dmg}$, it holds that $L(V_A/G) \subseteq L(V_{A^{sup}}/G)$.
\end{proposition}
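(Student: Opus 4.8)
\emph{Proof proposal.} The plan is to decompose $L(V_A/G)$ into the part lying inside $L(V/G)$ and the part lying outside, and to show that each part is contained in $L(V_{A^{sup}}/G)$. Observe first that since $A^{sup}$ is enabling we have $L(V/G)\subseteq L(V_{A^{sup}}/G)$, so the inclusion $L(V_A/G)\cap L(V/G)\subseteq L(V_{A^{sup}}/G)$ is immediate and requires no work. The entire content therefore lies in bounding the strings of $L(V_A/G)-L(V/G)$. Note that $A$ itself need not be enabling, so in general $L(V/G)\not\subseteq L(V_A/G)$; this causes no difficulty here, because we are only bounding $L(V_A/G)$ from above.

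For the outside part I would take an arbitrary $t\in L(V_A/G)-L(V/G)$. Since $\epsilon\in L(V/G)$ we have $t\neq\epsilon$, and the simplified inductive characterization of $L(V_A/G)$ for normal supervisors (Lemma~\ref{lemma: DOL}) forces $t=s'\sigma$ with $s'\in L(V/G)\cap L(V_A/G)$, $s'\sigma\in L(G)$, and $\sigma\in V_A(P_o(s'))$. Because $s'\in L(V/G)$ while $t\notin L(V/G)$, one gets $\sigma\notin V(P_o(s'))$, and then from $V_A(P_o(s'))=(V(P_o(s'))-\Sigma_{c,A})\cup A(\hat{P}_{o,A}^{V}(P_o(s')))$ together with $A(\cdot)\subseteq\Sigma_{c,A}$ it follows that $\sigma\in\Sigma_{c,A}$. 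The key step is then to invoke the reasoning establishing the implication $1)\Rightarrow 3)$ in the proof of Theorem~\ref{theorem:existence}: that reasoning takes \emph{any} successful attacker (enabling or not) and any string $s'\sigma\in L(V_A/G)-L(V/G)$, and concludes that $(s',\sigma)$ is an attack pair, i.e.\ $(s',\sigma)\in AP_{dmg}$.

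With $(s',\sigma)\in AP_{dmg}$ in hand, I would observe that $t=s'\sigma\in En(s',\sigma)$, using $s'$ itself as the witness in the definition of $En$ (indeed $s'\in L(V/G)$, $s'\sigma\in L(G)$, and $\hat{P}_{o,A}^{V}(P_o(s'))=\hat{P}_{o,A}^{V}(P_o(s'))$ trivially). Hence $t\in\bigcup_{(s,\sigma)\in AP_{dmg}}En(s,\sigma)$, which by Lemma~\ref{lemma: supre} equals $L(V_{A^{sup}}/G)-L(V/G)\subseteq L(V_{A^{sup}}/G)$. Combining the two parts,
\[
L(V_A/G)=\big(L(V_A/G)\cap L(V/G)\big)\cup\big(L(V_A/G)-L(V/G)\big)\subseteq L(V_{A^{sup}}/G),
\]
as desired.

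The only point that needs care, and the one I would flag as the main obstacle, is confirming that the $1)\Rightarrow 3)$ argument of Theorem~\ref{theorem:existence} genuinely applies to a non-enabling $A$: its derivation that $(s',\sigma)$ is an attack pair relies only on $A$ being successful and on Assumption~\ref{assump: normalattack} (normality of the attack constraint), not on $A$ enabling any event, so reusing it is legitimate. Everything else is a routine set-theoretic decomposition together with the supremality characterization already established in Lemma~\ref{lemma: supre}.
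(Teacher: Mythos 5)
Your proposal is correct and follows essentially the same route as the paper: both decompose $L(V_A/G)$ into the part inside $L(V/G)$ (handled by $L(V/G)\subseteq L(V_{A^{sup}}/G)$ via Lemma~\ref{lemma: supre}) and the part outside, and both reduce the latter to showing that each $s''=s\sigma\in L(V_A/G)-L(V/G)$ arises from an attack pair $(s,\sigma)\in AP_{dmg}$ by reusing the $1)\Rightarrow 3)$ argument of Theorem~\ref{theorem:existence}, so that $s''\in En(s,\sigma)\subseteq\bigcup_{(s,\sigma)\in AP_{dmg}}En(s,\sigma)$. Your explicit check that the Theorem~\ref{theorem:existence} argument does not require $A$ to be enabling is exactly the right point to flag; the paper relies on the same fact implicitly.
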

\begin{proof}
Let $A$ be any successful attacker on $(G, V)$ w.r.t. $(\Sigma_{c, A}, \Sigma_{o, A})$ and $L_{dmg}$, we have $\varnothing \neq L(V_A/G)-L(V/G) \subseteq L_{dmg}$. By Lemma~\ref{lemma: supre}, we only need to show that $L(V_A/G) \subseteq L(V/G) \cup \bigcup_{(s, \sigma) \in AP_{dmg}}En(s, \sigma)$. 

Let $s''$ be any string in $L(V_A/G)$. If $s'' \in L(V/G)$, then $s'' \in L(V/G) \cup \bigcup_{(s, \sigma) \in AP_{dmg}}En(s, \sigma)$. If $s'' \notin L(V/G)$, then we have $s'' \in L(V_A/G)-L(V/G) \subseteq L_{dmg}$; we only need to show that $s'' \in \bigcup_{(s, \sigma) \in AP_{dmg}}En(s, \sigma)$. It is clear that $s''=s\sigma \in L_{dmg}$ for some $s \in L(V/G)$ and some $\sigma \in \Sigma_{c, A}$. From the proof of Theorem~\ref{theorem:existence}, we know that $(s, \sigma)$ is an attack pair. Then, it follows that $s''=s\sigma\in En(s, \sigma) \subseteq \bigcup_{(s, \sigma) \in AP_{dmg}}En(s, \sigma)$.
\end{proof}
\subsection{Synthesis of the Supremal Successful Attacker}
\label{subsection:SSSEA}
In this subsection, we shall address the problem of synthesis of the supremal successful attacker $A^{sup}$ under the normality assumption. 

Recall that $G=(Q, \Sigma, \delta, q_0)$ is the plant, $V$ is the supervisor on $G$ w.r.t. $(\Sigma_c, \Sigma_o)$, which is given by the finite state automaton $S=(X, \Sigma, \zeta, x_0)$, and $L_{dmg}$ is the set of damaging strings, which is a regular language over $\Sigma$. Let $L_{dmg}$ be recognized by the damage automaton $H=(Z, \Sigma, \eta, z_0, Z_m)$, i.e., $L_m(H)=L_{dmg}$\footnote{$L_m(H):=\{s \in L(H) \mid \eta(z_0, s) \in Z_m\}$ is the marked behavior of $H$.}. We require $H$ to be a complete finite state automaton, i.e., $L(H)=\Sigma^*$. We shall now provide a high-level idea of the synthesis algorithm using $G$, $S$ and $H$. 


In order to synthesize the supremal successful attacker $A^{sup}$ on $(G, S)$ w.r.t. $(\Sigma_{c, A}, \Sigma_{o, A})$ and $L_{dmg}$, which is given in Section~\ref{sub: CSSEA}, we need to know the attacker's observation sequence $\hat{P}_{o, A}^{V}(P_o(s))$ after the plant executes a string $s \in L(S \lVert G)$ and also record the set of all the strings $s' \in L(S \lVert G)$ that are observationally equivalent to $s$ from the attacker's point of view, i.e., $\hat{P}_{o, A}^{V}(P_o(s))=\hat{P}_{o, A}^{V}(P_o(s'))$; then, to determine whether an attack on the attackable event $\sigma \in \Sigma_{c, A}$ can be successfully established upon observing $\hat{P}_{o, A}^{V}(P_o(s))$, this set of strings $s'$ needs to be compared with the set $L_m(H)=L_{dmg}$ by considering all the possible $\sigma$-extensions $s'\sigma \in L(G)$ of $s'$ in $G$. 

To compute the attacker's observation sequence $\hat{P}_{o, A}^{V}(P_o(s))$, we need to annotate the supervisor's observation sequence $P_o(s) \in P_o(L(S \lVert G))$ with the sequence of control commands issued by the supervisor. Now, the basic idea is to first annotate all the string executions in $P_o(S)$ and then transfer the annotation to $S \lVert G$. With a dedicated product operation, one can obtain a transducer structure that maps string executions in $L(S \lVert G)$ to attacker's observation sequences, that is, a finite state structure encoding the $\hat{P}_{o, A}^{V}\circ P_o$ function. Since the attacker observes nothing when an unobservable event to the supervisor is executed, epsilon label can occur in the attacker's observation annotation\footnote{The attacker's observation annotation consists of $\epsilon$ label and those labels from the attacker's observation alphabet $(\Sigma_{o, A} \cup \{\epsilon\})\times \Gamma$; those events in $\Sigma_{uo}$ are mapped to the $\epsilon$ label, while the events in $\Sigma_{o}$ are mapped to the labels in $(\Sigma_{o, A} \cup \{\epsilon\})\times \Gamma$.} in the transducer structure for $\hat{P}_{o, A}^{V}\circ P_o$. We need to determinize the transducer structure with respect to the attacker's observation alphabet; it involves a subset construction w.r.t. the attacker's observation alphabet. After the determization operation, we can compute the set of all the strings in $L(S \lVert G)$ that give rise to the same attacker's observation sequence. However, this is not yet sufficient. To determine whether the attacker can establish a successful attack, we need to incorporate $H$ in the synchronization to obtain a refined transducer-like structure, before the determinization is carried out. The synchronization is performed up to those strings in $L(S \lVert G)$ and those strings in $L(G)$ that are one-step $\sigma$-extensions of strings in $L(S \lVert G)$, for each $\sigma \in \Sigma_{c, A}$. 

The synthesis algorithm for the supremal successful attackers consists of the following three steps:
\begin{enumerate}
\item compute an annotated supervisor
\item compute the generalized synchronous product of the plant, annotated supervisor and the damage automaton
\item perform a subset construction on the generalized synchronous product w.r.t. the attacker's observation alphabet
\end{enumerate}
For example, let us consider the plant $G'$ shown in the left of Fig.~\ref{fig:PS} and the supervisor $S'$ shown in the right of Fig.~\ref{fig:PS}. We have $\Sigma=\{a, a', b, c, d, d'\}$, $\Sigma_o=\{a, a', c, d, d'\}$, $\Sigma_c=\Sigma_o$, $\Sigma_{c, A}=\{d, d'\}$ and $\Sigma_{o, A}=\{c, d, d'\}$. The colored state, i.e., the state 4, in the plant is the unique bad state to avoid. It is not difficult to verify that $S'$ is indeed a supervisor on $G'$ w.r.t. $(\Sigma_c, \Sigma_o)$ and the state avoidance property has been enforced, that is, in the synchronous product of $G'$ and $S'$, every state $(q, x) \in Q \times X$, where $q$ is the state 4, can never be reached. We shall illustrate the three steps of the synthesis algorithm using this example of $G'$ and $S'$ in the rest of this section.

\begin{figure}[h]
\centering
\hspace*{-1mm}
\captionsetup{justification=centering}
\includegraphics[width=3.4in, height=1.6in]{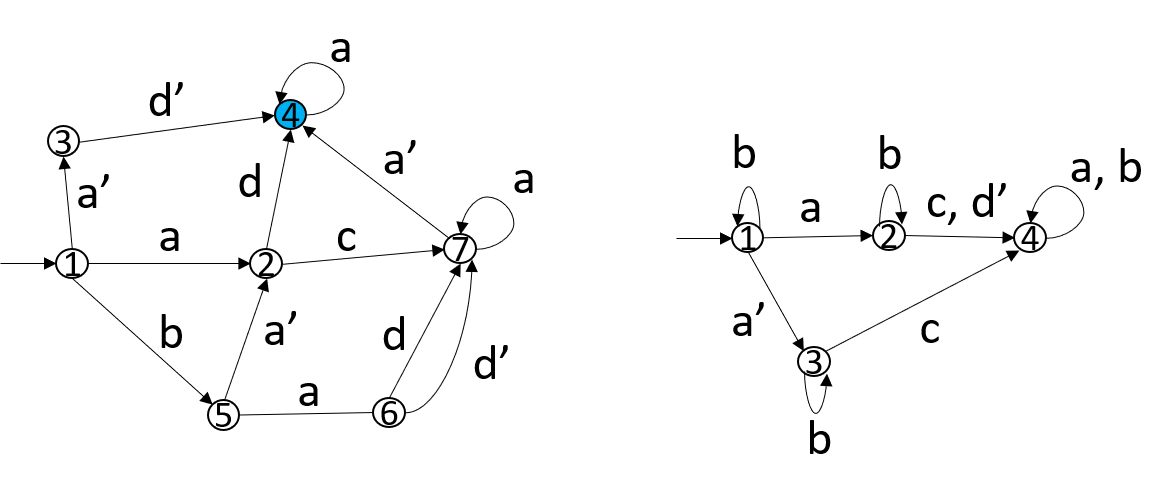} 
\caption{The plant $G'$ and the supervisor $S'$}
\label{fig:PS}
\end{figure}

\subsubsection{Annotation of the Supervisor}
Given the supervisor $S=(X, \Sigma, \zeta, x_0)$, we produce the annotated supervisor 
\begin{center}
$S^{A}=(X, \Sigma_o \times \Gamma \cup \Sigma_{uo}, \zeta^{A}, x_0)$, 
\end{center}
where $\zeta^{A}: X \times (\Sigma_o \times \Gamma \cup \Sigma_{uo}) \longrightarrow X$ is the partial transition function, which is defined as follows:
\begin{enumerate}
\item For any $x, x' \in X$, $\sigma \in \Sigma_o$, $\gamma\in \Gamma$, $\zeta_A(x, (\sigma, \gamma))=x'$ iff $\zeta(x, \sigma)=x'$ and $\gamma=\{\sigma \in \Sigma \mid \zeta(x', \sigma)!\}$.
\item For any $x, x' \in X$, $\sigma \in \Sigma_{uo}$, $\zeta_A(x, \sigma)=x'$ iff $\zeta(x, \sigma)=x'$.
\end{enumerate}
That is, we annotate each observable transition $(x, \sigma, x')$ in $\zeta$ with the control command $\gamma$ issued by the supervisor at state $x'$; that is, $(x, \sigma, x')$ is replaced by $(x, (\sigma, \gamma), x')$ in $\zeta^{A}$. Thus, if we project out the unobservable events in $\Sigma_{uo}$, each string in $S^{A}$ will be of the form $(\sigma_1, \gamma_1)(\sigma_2, \gamma_2)\ldots (\sigma_n, \gamma_n) \in (\Sigma_o \times \Gamma)^*$. From the construction of $S^{A}$, we have $V(\sigma_1\sigma_2 \ldots \sigma_i)=\{\sigma \in \Sigma \mid \zeta(\zeta(x_0, s'), \sigma)!, s' \in L(S) \cap P_o^{-1}(\sigma_1\sigma_2 \ldots \sigma_i)\}=\gamma_i$ for each $1\leq i \leq n$. Thus, if we project out the unobservable events in $\Sigma_{uo}$, each string in $S^{A}$ is of the form 
\begin{center}
$(\sigma_1, V(\sigma_1))(\sigma_2, V(\sigma_1\sigma_2))\ldots (\sigma_n, V(\sigma_1\sigma_2 \ldots \sigma_n)) \in (\Sigma_o \times \Gamma)^*$,
\end{center} where $\sigma_1\sigma_2\ldots \sigma_n$ is the supervisor's observation sequence. It is then straightforward to record all the attacker's observation sequences 
\begin{center}
$(P_{o, A}(\sigma_1), V(\sigma_1))(P_{o, A}(\sigma_2), V(\sigma_1\sigma_2))\ldots (P_{o, A}(\sigma_n), V(\sigma_1\sigma_2 \ldots \sigma_n)) \in ((\Sigma_{o, A} \cup \{\epsilon\})\times \Gamma)^*$
\end{center}
in $S^A$ afterwards, which will be delayed until the annotation in $P_o(S^A)$ is transferred to $S \lVert G$ via a dedicated synchronous product operation. 

For example, the annotated supervisor $S'^A$ of $S'$ in Fig.~\ref{fig:PS} is shown in Fig.~\ref{fig:Asupervisor}. If the plant generates $s=ba'c$, then $P_o(s)=a'c$ is the supervisor's observation sequence. It is clear that the attacker's observation sequence is $(\epsilon, \{b, c\})(c, \{b, a\})$. However, we only record the sequence $b(a', \{b, c\})(c, \{b, a\})$ in $S'^A$ or, equivalently, we record the sequence $(a', \{b, c\})(c, \{b, a\})$ in $P(S'^A)$.
\begin{figure}[h]
\centering
\hspace*{-1mm}
\captionsetup{justification=centering}
\includegraphics[width=3.0in, height=1.2in]{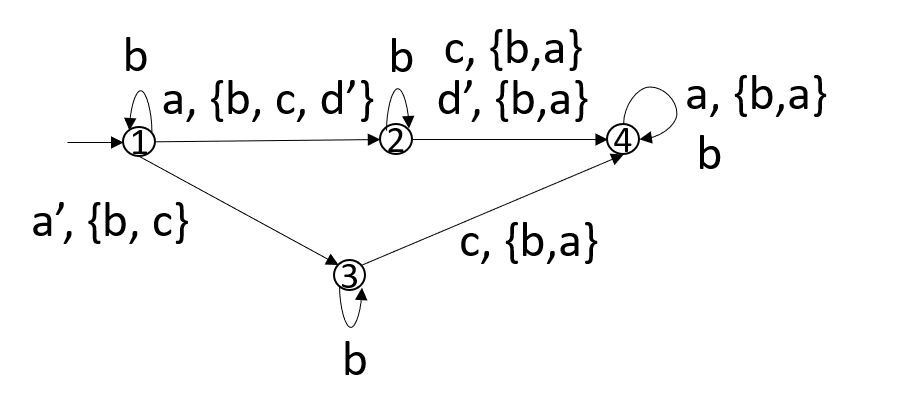} 
\caption{The annotated supervisor $S'^A$}
\label{fig:Asupervisor}
\end{figure}

We need to transfer the annotation in $P(S^A)$ to $S \lVert G$ to obtain a transducer structure that maps string executions in $L(S \lVert G)$ to attacker's observation sequences. This is not that difficult as we only need to synchronize the plant $G$ and the annotated supervisor $S^A$ using a dedicated synchronous product operation, which transfers the attacker's observation annotation, which then encodes the function $\hat{P}_{o, A}^{V} \circ P_o: L(S \lVert G) \rightarrow ((\Sigma_{o, A} \cup \{\epsilon\}) \times \Gamma)^*$. Based on the product of $G$ and $S^A$, we can compute every set of strings in $L(S \lVert G)$ that can be mapped to the same attacker's observation sequence, via a subset construction w.r.t. the attacker's observation alphabet. To determine whether an attack shall be established for an attacker's observation sequence, we need to synchronize $G$ and $S^A$ with $H$ before the determinization is performed. The overall product operation will be referred to as the {\em generalized synchronous product} operation, which is explained in detail in the next subsection.

\subsubsection{Generalized Synchronous Product}
Given the plant $G=(Q, \Sigma, \delta, q_0)$, the annotated supervisor $S^{A}=(X, \Sigma_o \times \Gamma \cup \Sigma_{uo}, \zeta^{A}, x_0)$ and the damage automaton $H=(Z, \Sigma, \eta, z_0, Z_m)$, the generalized synchronous product $GP(G, S^A, H)$ of $G$, $S^{A}$ and $H$ is given by
\begin{center}
$(Q \times X \times Z \cup \{\bot, \top\}, \Sigma^{GP}, \delta^{GP}, (q_0, x_0, z_0))$,
\end{center}
where $\bot, \top$ are two new states that are different from the states in $Q \times X \times Z$, $\Sigma^{GP}=\Sigma_o \times ((\Sigma_{o, A} \cup \{\epsilon\}) \times \Gamma) \cup \Sigma_{uo} \times \{\epsilon\} \cup \Sigma_{c, A}$ and the partial transition function 
\begin{center}
$\delta^{GP}: (Q \times X \times Z \cup \{\bot, \top\}) \times \Sigma^{GP} \longrightarrow Q \times X \times Z \cup \{\bot, \top\}$
\end{center}
 is defined as follows: 
\begin{enumerate}
\item For any $q, q' \in Q$, $x, x' \in X$, $z, z' \in Z$, $\gamma \in \Gamma$, $\sigma \in \Sigma_{o, A}$, $\delta^{GP}((q, x, z), (\sigma, (\sigma, \gamma)))=(q', x', z')$ iff $\delta(q, \sigma)=q'$, $\zeta^A(x, (\sigma, \gamma))=x'$ and $\eta(z, \sigma)=z'$.
\item For any $q, q' \in Q$, $x, x' \in X$, $z, z' \in Z$, $\gamma \in \Gamma$, $\sigma \in \Sigma_{o}-\Sigma_{o, A}$, $\delta^{GP}((q, x, z), (\sigma, (\epsilon, \gamma)))=(q', x', z')$ iff $\delta(q, \sigma)=q'$, $\zeta^A(x, (\sigma, \gamma))=x'$ and $\eta(z, \sigma)=z'$.
\item For any $q, q' \in Q$, $x, x' \in X$, $z, z' \in Z$, $\sigma \in \Sigma_{uo}$, $\delta^{GP}((q, x, z), (\sigma, \epsilon))=(q', x', z')$ iff $\delta(q, \sigma)=q'$, $\zeta^A(x, \sigma)=x'$ and $\eta(z, \sigma)=z'$.
\item For any $q \in Q$, $x \in X$, $z\in Z$, $\sigma \in \Sigma_{c, A}$, $\delta^{GP}((q, x, z), \sigma)=\top$ iff $\delta(q, \sigma)!$, $\neg \zeta(x, \sigma)!$ and $\eta(z, \sigma) \in Z_m$.
\item For any $q \in Q$, $x \in X$, $z\in Z$, $\sigma \in \Sigma_{c, A}$, $\delta^{GP}((q, x, z), \sigma)=\bot$ iff $\delta(q, \sigma)!$, $\neg \zeta(x, \sigma)!$ and $\eta(z, \sigma) \notin Z_m$.
\end{enumerate}
The definition of $GP(G, S^A, H)$ looks complicated. We shall analyze each item carefully. The state space is $Q \times X \times Z \cup \{\bot, \top\}$. Intuitively, the state $\bot$ indicates a failed attack, while the state $\top$ indicates a successful attack. The alphabet $\Sigma^{GP}$ is the union of 
\begin{center}
$\Sigma_o \times ((\Sigma_{o, A} \cup \{\epsilon\}) \times \Gamma)$, $\Sigma_{uo} \times \{\epsilon\}$ and $\Sigma_{c, A}$.
\end{center}
 For each element in $\Sigma_o \times ((\Sigma_{o, A} \cup \{\epsilon\}) \times \Gamma)$ and $\Sigma_{uo} \times \{\epsilon\}$, the first component $\sigma \in \Sigma$ is used to denote an event executed by the plant and the second component $l$ is used to denote the corresponding observation (annotation) by the attacker. In particular, if $\sigma \in \Sigma_{o, A} \subseteq \Sigma_o$ is executed by the plant, then the attacker's observation is $l=(\sigma, \gamma) \in \Sigma_{o, A} \times \Gamma$ for some $\gamma \in \Gamma$; the attacker's observation is $l=(\epsilon, \gamma) \in \{\epsilon\} \times \Gamma$ for some $\gamma \in \Gamma$ if $\sigma \in \Sigma_o-\Sigma_{o, A}$ is executed by the plant. If $\sigma \in \Sigma_{uo}$ is executed by the plant, then the attacker observes nothing and we use the $l=\epsilon$ label for the annotation. 

By the items 1)-3) of the definition of $\delta^{GP}$, we can observe that every string $s \in L(S \lVert G)$ is the projection of some string of $L(GP(G, S^A, H))$ onto the first component. Moreover, for any string of the form $\overline{s}=(\sigma_1, l_1)(\sigma_2, l_2)\ldots (\sigma_n, l_n)$ in $L(GP(G, S^A, H))$, we 
have that $l_1l_2 \ldots l_n=\hat{P}_{o, A}^{V}(P_o(\sigma_1\sigma_2\ldots \sigma_n))$. That is, the items 1)-3) of the definition of $\delta^{GP}$ ensures that part of $GP(G, S^A, H)$ encodes the transducer structure that maps string executions in $L(S \lVert G)$ to attacker's observation sequences. By the items 4)-5) of the definition of $\delta^{GP}$, we observe that, when projected onto the first component, $L(GP(G, S^A, H))$ can generate strings in $L(G)-L(S \lVert G)$, but these strings can only be some one-step $\sigma$-extensions of strings in $L(S \lVert G)$ for $\sigma \in \Sigma_{c, A}$; in particular, these strings occur precisely because of the actuator attack on attackable events. After the actuator attack, the closed-loop system will be halted and thus we will not need to record the attacker's observation any more. It is for this reason that $\Sigma_{c, A}$ is included in $\Sigma^{GP}$. $GP(G, S^A, H)$ transduces strings in $L(S \lVert G)$ and, when projected onto the first component, record the strings in $L(S \Vert G)\Sigma_{c, A} \cap L(G)-L(S \lVert G)$. Thus, for each string $\overline{s}$ of the form $(\sigma_1, l_1)(\sigma_2, l_2)\ldots (\sigma_n, l_n)\sigma_{n+1}$ generated by $GP(G, S^A, H)$, where $\sigma_{n+1} \in  \Sigma_{c, A}$, we have that $\sigma_1\sigma_2\ldots \sigma_{n+1} \in L_{dmg}$ if $\delta^{GP}((q_0, x_0, z_0), \overline{s})=\top$ and $\sigma_1\sigma_2\ldots \sigma_{n+1} \notin L_{dmg}$ if $\delta^{GP}((q_0, x_0, z_0), \overline{s})=\bot$; in both cases, we have $\sigma_1\sigma_2\ldots \sigma_{n+1} \in L(S \Vert G)\Sigma_{c, A} \cap L(G)-L(S \lVert G)$. 


For example, consider the damage automaton $H'$ shown in Fig.~\ref{fig:damage}, where we have omitted the dump state, i.e., state 6, and the corresponding transitions. 
\begin{figure}[h]
\centering
\hspace*{-1mm}
\captionsetup{justification=centering}
\includegraphics[width=2.2in, height=1.0in]{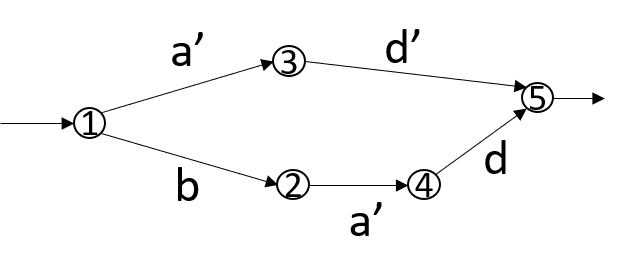} 
\caption{The damage automaton $H'$, where the dump state is not explicitly shown}
\label{fig:damage}
\end{figure}
The generalized synchronous product $GP(G', S'^A, H')$ of $G'$, $S'^{A}$ and $H'$ is then shown in Fig.~\ref{fig:GSP}. The set of damaging strings is $L'_{dmg}=\{a'd', ba'd\}$. 

\begin{figure}[h]
\centering
\hspace*{-1mm}
\captionsetup{justification=centering}
\includegraphics[width=5.2in, height=3.8in]{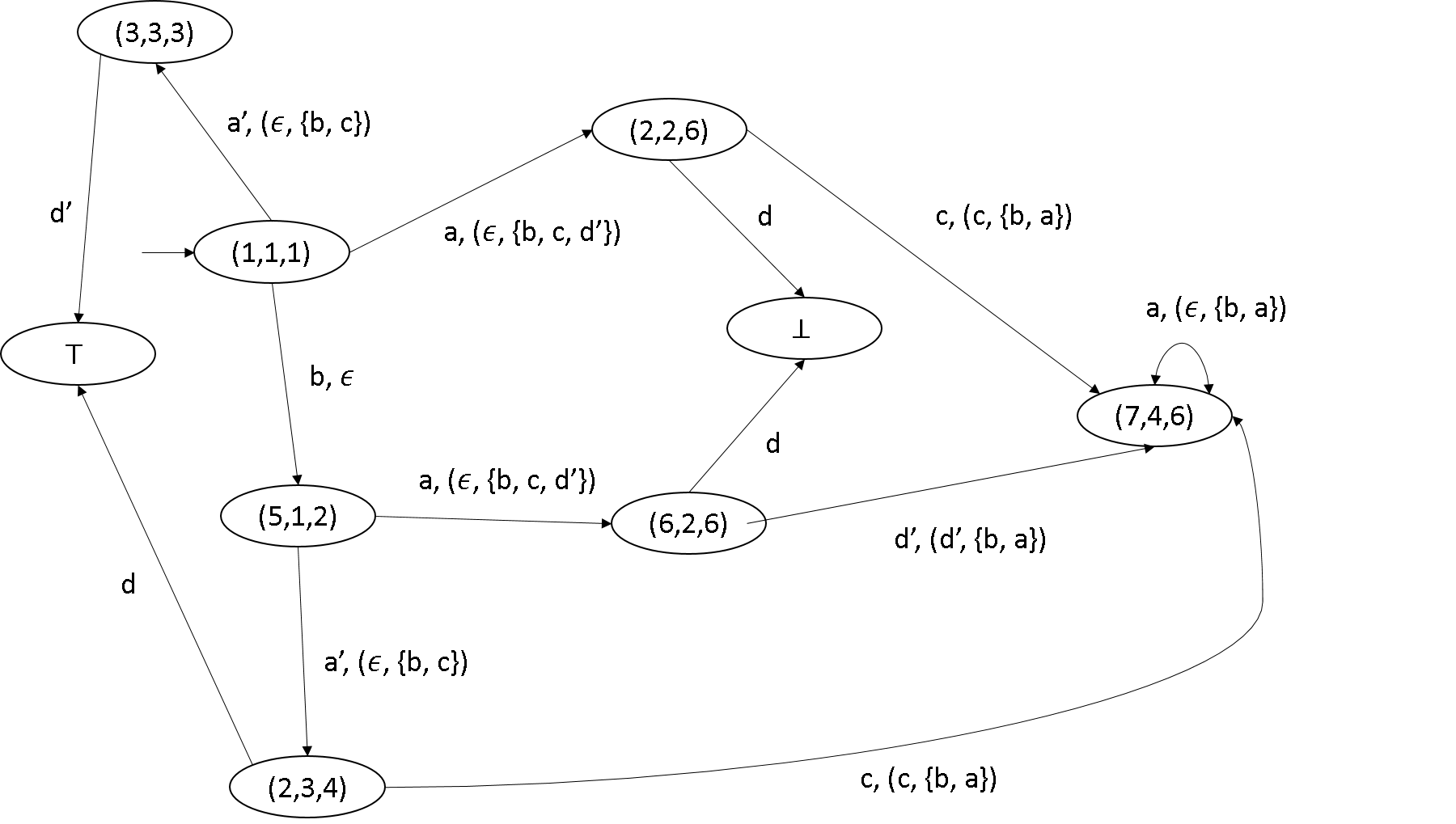} 
\caption{The generalized synchronous product $GP(G', S'^A, H')$}
\label{fig:GSP}
\end{figure}

\subsubsection{Subset Construction w.r.t. the Attacker's Observation Alphabet}
The generalized synchronous product $GP(G, S^A, H)$ is a transducer-like structure that maps string executions in $L(S \lVert G)$ to attacker's observation sequences; furthermore, it records in all the cases when an actuator attack occurs and also determines whether it will succeed. To determine whether a successful actuator attack can be established, we need to determine each set\footnote{By incorporating $H$ in the generalized synchronous product, we do not even need to explicitly construct this set: we can directly compare this set of strings with the set $L_m(H)=L_{dmg}$ by considering all the possible $\sigma$-extensions of these strings in $G$ with subset construction.} of string executions in $L(S \lVert G)$ that are mapped to the same attackers' observation sequence. This is what will be carried out in this subsection, which involves a subset construction w.r.t. the attacker's observation alphabet. 

{\bf Subset Construction w.r.t. the Attacker's Observation Alphabet}: 

Given $GP(G, S^A, H)$, we shall now perform subset construction on $GP(G, S^A, H)$ with respect to the attacker's observation alphabet. 

Let $GPS(G, S^A, H)$ denote the sub-automaton of $GP(G, S^A, H)$ with state space restricted to $Q \times X \times Z$. The alphabet of $GPS(G, S^A, H)$ is $\Sigma^{GPS}=\Sigma_o \times ((\Sigma_{o, A} \cup \{\epsilon\}) \times \Gamma) \cup \Sigma_{uo} \times \{\epsilon\}$ and it is clear that the partial transition function $\delta^{GPS}: (Q \times X \times Z) \times \Sigma^{GPS} \mapsto Q \times X \times Z$ of $GPS(G, S^A, H)$ is defined as follows: 
\begin{enumerate}
\item For any $q, q' \in Q$, $x, x' \in X$, $z, z' \in Z$, $\gamma \in \Gamma$, $\sigma \in \Sigma_{o, A}$, $\delta^{GPS}((q, x, z), (\sigma, (\sigma, \gamma)))=(q', x', z')$ iff $\delta(q, \sigma)=q'$, $\zeta^A(x, (\sigma, \gamma))=x'$ and $\eta(z, \sigma)=z'$.
\item For any $q, q' \in Q$, $x, x' \in X$, $z, z' \in Z$, $\gamma \in \Gamma$, $\sigma \in \Sigma_{o}-\Sigma_{o, A}$, $\delta^{GPS}((q, x, z), (\sigma, (\epsilon, \gamma)))=(q', x', z')$ iff $\delta(q, \sigma)=q'$, $\zeta^A(x, (\sigma, \gamma))=x'$ and $\eta(z, \sigma)=z'$.
\item For any $q, q' \in Q$, $x, x' \in X$, $z, z' \in Z$, $\sigma \in \Sigma_{uo}$, $\delta^{GPS}((q, x, z), (\sigma, \epsilon))=(q', x', z')$ iff $\delta(q, \sigma)=q'$, $\zeta^A(x, \sigma)=x'$ and $\eta(z, \sigma)=z'$.
\end{enumerate}
Let $GPS^1(G, S^A, H)$ (respectively, $GPS^2(G, S^A, H)$) denote the automaton that is obtained from $GPS(G, S^A, H)$ by removing\footnote{In general, $GPS^2(G, S^A, H)$ is an $\epsilon$-nondeterministic finite automaton~\cite{HU79}.} the second component $l$ (respectively, first component $\sigma$) from the event $(\sigma, l)$ for each transition. Let $SUB(GPS^2(G, S^A, H))$ denote the automaton that is obtained from $GPS^2(G, S^A, H)$ by subset construction~\cite{HU79}. It follows that $SUB(GPS^2(G, S^A, H))=(Y, \Sigma^{SUB}, \Delta^{SUB}, y_0)$, where $Y=2^{Q \times X \times Z}$, $y_0=\delta^{GPS}((q_0, x_0, z_0), (\Sigma_{uo} \times \{\epsilon\})^*) \in Y$, $\Sigma^{SUB}=(\Sigma_{o, A} \cup \{\epsilon\}) \times \Gamma$, $\Delta^{SUB}: Y \times \Sigma^{SUB} \longrightarrow Y$ is the transition function defined as follows: for any $y \in Y$, $l \in \Sigma^{SUB}$, $\Delta^{SUB}(y, l):=$
\begin{center}
$\delta^{GPS}(\delta^{GPS}(y, \Sigma_o \times \{l\}), (\Sigma_{uo} \times \{\epsilon\})^*)$
\end{center}
Let $SUB(GP(G, S^A, H)):=(SUB(GPS^2(G, S^A, H)), L_f)$, where $Lf: Y \mapsto 2^{\Sigma_{c, A}}$ is a labeling function defined as follows: for any $y \in Y$ and any $\sigma \in \Sigma_{c, A}$, $\sigma \in Lf(y)$ iff there exists some $v \in y$ such that,
\begin{enumerate}
\item $\delta^{GP}(v, \sigma)=\top$, and
\item for any $v' \in y$, $\delta^{GP}(v', \sigma)!$ implies $\delta^{GP}(v', \sigma)=\top$.
\end{enumerate}

Intuitively, $SUB(GP(G, S^A, H))$ is a Moore automaton that maps each attacker's observation sequence to the subset of attackable events that are attacked\footnote{An attackable event is attacked by $A^{sup}$ if it is disabled by the supervisor and is enabled by $A^{sup}$.} by $A^{sup}$. In particular, after the attacker observes the sequence $l_1l_2\ldots l_n \in ((\Sigma_{o, A} \cup \{\epsilon\}) \times \Gamma)^*=(\Sigma^{SUB})^*$, the subset of attackable events that are attacked by $A^{sup}$ is $Lf(\Delta^{SUB}(y_0, l_1l_2\ldots l_n))$. If $Lf(\Delta^{SUB}(y_0, l_1l_2\ldots l_n))=\varnothing$, then the attacker cannot establish an attack after observing $l_1l_2\ldots l_n$. If this holds for all the possible attacker's observation sequences in $\hat{P}_{o, A}^{V}(P_o(L(S \lVert G)))$, then we conclude that $(G, S)$ is not attackable w.r.t. $(\Sigma_{c, A}, \Sigma_{o, A})$ and $L_{dmg}$.

\subsubsection{Proof of Correctness}

In this subsection, we shall show that the supremal successful attacker $A^{sup}: \hat{P}_{o, A}^{V}(P_o(L(S \lVert G))) \mapsto \Delta \Gamma$ is realized\footnote{A formal interpretation will be provided soon.} by the Moore automaton $SUB(GP(G, S^A, H))$, if $(G, S)$ is attackable w.r.t. $(\Sigma_{c, A}, \Sigma_{o, A})$ and $L_{dmg}$. If $(G, S)$ is not attackable w.r.t. $(\Sigma_{c, A}, \Sigma_{o, A})$ and $L_{dmg}$, then we can also directly read this information off $SUB(GP(G, S^A, H))$. 


The next lemma shows that the domain of $A^{sup}$ is indeed captured by $SUB(GP(G, S^A, H))$. The closed-behavior $L(SUB(GP(G, S^A, H)))$ of the Moore automaton $SUB(GP(G, S^A, H))=(SUB(GPS^2(G, S^A, H)), L_f)$ is defined to be $L(SUB(GPS^2(G, S^A, H)))$\footnote{That is, the labeling function is ignored in the definition of the closed-behavior of an Moore automaton.}. 
\begin{lemma}
$L(SUB(GP(G, S^A, H)))=\hat{P}_{o, A}^{V}(P_o(L(S \lVert G)))$.
\end{lemma}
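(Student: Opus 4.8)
The plan is to peel off the two definitional layers—the labeling function and the subset construction—and reduce the claim to a transducer identity that is essentially already recorded after the definition of $GP(G, S^A, H)$. First I would note that, by the stipulated definition of the closed behaviour of a Moore automaton, $L(SUB(GP(G, S^A, H))) = L(SUB(GPS^2(G, S^A, H)))$, so the labeling function $L_f$ plays no role here and may be ignored; it then remains to show $L(SUB(GPS^2(G, S^A, H))) = \hat{P}_{o, A}^{V}(P_o(L(S \lVert G)))$.

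Next I would invoke the correctness of the subset construction. Since $GPS^2(G, S^A, H)$ is an $\epsilon$-NFA over the alphabet $\Sigma^{SUB} = (\Sigma_{o, A} \cup \{\epsilon\}) \times \Gamma$—the genuine $\epsilon$-moves being exactly the transitions inherited from the $\Sigma_{uo} \times \{\epsilon\}$ events once the first component is deleted—the standard determinization result~\cite{HU79} gives $L(SUB(GPS^2(G, S^A, H))) = L(GPS^2(G, S^A, H))$, the equality holding for generated (prefix-closed) languages as well. Here I would verify that the prescribed $y_0 = \delta^{GPS}((q_0, x_0, z_0), (\Sigma_{uo} \times \{\epsilon\})^*)$ and $\Delta^{SUB}(y, l) = \delta^{GPS}(\delta^{GPS}(y, \Sigma_o \times \{l\}), (\Sigma_{uo} \times \{\epsilon\})^*)$ correctly realize the $\epsilon$-closure, so that the silent $\Sigma_{uo}$-moves are absorbed into the initial and post-transition closures.

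The heart of the argument is then the transducer identity $L(GPS^2(G, S^A, H)) = \hat{P}_{o, A}^{V}(P_o(L(S \lVert G)))$, which I would prove by induction on string length. By items 1)--3) of $\delta^{GP}$ (equivalently $\delta^{GPS}$) a transition fires iff $\delta(q, \sigma)!$, $\zeta(x, \sigma)!$ and $\eta(z, \sigma)!$ all hold; since $H$ is complete the $Z$-component never blocks, so the conjunction $\delta(q, \sigma)! \wedge \zeta(x, \sigma)!$ is precisely the synchronous-product condition, and the first-component projection of $L(GPS(G, S^A, H))$ equals $L(S \lVert G)$. Moreover, for every run with first-component string $s = \sigma_1 \cdots \sigma_n \in L(S \lVert G)$ the label sequence $l_1 \cdots l_n$ equals $\hat{P}_{o, A}^{V}(P_o(s))$—this is exactly the observation already made after the definition of $GP$, and it follows from the case split $\sigma_i \in \Sigma_{o, A}$, $\sigma_i \in \Sigma_o - \Sigma_{o, A}$, $\sigma_i \in \Sigma_{uo}$ in items 1)--3). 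Deleting the first component (the passage to $GPS^2$) and absorbing the $\epsilon$-labels then turns these runs into exactly the strings $\hat{P}_{o, A}^{V}(P_o(s))$, giving both inclusions.

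I expect the main obstacle to be the bookkeeping around the silent moves rather than any conceptual difficulty: one must check carefully that the $\Sigma_{uo}$-induced $\epsilon$-transitions are folded consistently into both the $\epsilon$-closures defining $y_0$ and $\Delta^{SUB}$ and into the concatenation $l_1 \cdots l_n$ (where each $\epsilon$-label acts as the identity), so that the determinized automaton generates neither more nor fewer observation sequences than $\hat{P}_{o, A}^{V} \circ P_o$ does on $L(S \lVert G)$.
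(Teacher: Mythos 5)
Your proposal is correct and follows essentially the same route as the paper's proof: strip off the labeling function, use that subset construction preserves the closed behaviour of the $\epsilon$-NFA $GPS^2(G, S^A, H)$, and then appeal to the transducer property of $GPS(G, S^A, H)$ (first components range over $L(S \lVert G)$, label sequences equal $\hat{P}_{o, A}^{V}(P_o(\cdot))$). You simply make explicit a few details the paper leaves implicit, namely the induction, the role of the completeness of $H$, and the $\epsilon$-closure bookkeeping.
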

\begin{proof}
It is clear that $L(SUB(GPS^2(G, S^A, H)))=L(GPS^2(G, S^A, H))$\footnote{The definition of the closed-behavior of an $\epsilon$-nondeterministic finite automaton ($\epsilon$-NFA) is similar to that of a deterministic finite automaton~\cite{HU79}.}, since subset construction preserves the closed-behavior of an $\epsilon$-NFA. Recall that $GPS^2(G, S^A, H)$ denotes the $\epsilon$-NFA that is obtained from $GPS(G, S^A, H)$ by removing the first component $\sigma$ from the event $(\sigma, l)$ for each transition. By the construction of $GPS(G, S^A, H)$, $L(GPS(G, S^A, H))$ consists of strings of the form $\overline{s}=(\sigma_1, l_1)(\sigma_2, l_2)\ldots (\sigma_n, l_n)$, where $l_1l_2 \ldots l_n=\hat{P}_{o, A}^{V}(P_o(\sigma_1\sigma_2\ldots \sigma_n))$ and $\sigma_1\sigma_2\ldots \sigma_n \in L(S \lVert G)$; moreover, every string $s$ in $L(S \lVert G)$ is the projection of some string in $L(GPS(G, S^A, H))$ onto the first component, by the construction of $GPS(G, S^A, H)$. Thus, we have $L(GPS^1(G, S^A, H))=L(S \lVert G)$ and $L(GPS^2(G, S^A, H))=\hat{P}_{o, A}^{V}(P_o(L(S \lVert G)))$. Thus, $L(SUB(GP(G, S^A, H)))=\hat{P}_{o, A}^{V}(P_o(L(S \lVert G)))$.
\end{proof}
\begin{proposition}
\label{propo: existence}
$AP_{dmg} \neq \varnothing$ iff there exists a reachable state $y$ of $SUB(GP(G, S^A, H))$ with $Lf(y) \neq \varnothing$.
\end{proposition}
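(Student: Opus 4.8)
The plan is to first pin down exactly which subsets $y \subseteq Q \times X \times Z$ are reachable in $SUB(GP(G, S^A, H))$ and what they represent, and then to read off the labeling condition as a verbatim translation of the two attack-pair conditions. By the preceding lemma, $L(SUB(GP(G, S^A, H))) = \hat{P}_{o, A}^{V}(P_o(L(S \lVert G)))$, so every reachable state is of the form $y = \Delta^{SUB}(y_0, w)$ for some observation sequence $w \in \hat{P}_{o, A}^{V}(P_o(L(S \lVert G)))$. Since $G$, $S$ and $H$ are deterministic and $H$ is complete (so that $\eta(z_0, s')$ is defined for every $s'$), each string $s' \in L(S \lVert G)$ drives $GPS(G, S^A, H)$ to the unique state $(\delta(q_0, s'), \zeta(x_0, s'), \eta(z_0, s'))$. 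I would therefore prove the bookkeeping claim
\[
y = \{(\delta(q_0, s'), \zeta(x_0, s'), \eta(z_0, s')) \mid s' \in L(S \lVert G),\ \hat{P}_{o, A}^{V}(P_o(s')) = w\}
\]
by induction on $|w|$, using that $y_0$ is the $\Sigma_{uo}$-closure of the initial state and that $\Delta^{SUB}$ appends one observation label and then takes the $\Sigma_{uo}$-closure.

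Next I would translate the transition conditions of $GP$ into statements about strings. For $v = (\delta(q_0, s'), \zeta(x_0, s'), \eta(z_0, s')) \in y$ and $\sigma \in \Sigma_{c, A}$, the definition of $\delta^{GP}$ gives that $\delta^{GP}(v, \sigma)!$ iff $s'\sigma \in L(G)$ and $\sigma \notin V(P_o(s'))$ (the latter because for a normal supervisor $\neg \zeta(\zeta(x_0, s'), \sigma)!$ is exactly $\sigma \notin V(P_o(s'))$, the supervisor state being determined by $P_o(s')$); moreover $\delta^{GP}(v, \sigma) = \top$ iff in addition $s'\sigma \in L_{dmg}$, and $\delta^{GP}(v, \sigma) = \bot$ iff instead $s'\sigma \notin L_{dmg}$. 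I would also invoke the fact, already established in the proof of Theorem~\ref{theorem:existence}, that $\hat{P}_{o, A}^{V}(P_o(s')) = \hat{P}_{o, A}^{V}(P_o(s))$ forces $V(P_o(s')) = V(P_o(s))$; consequently, once an attack pair fixes $\sigma \notin V(P_o(s))$, every observation-equivalent $s'$ also has $\sigma \notin V(P_o(s'))$, so the premise $\delta^{GP}(v', \sigma)!$ reduces on the class $y$ to simply $s'\sigma \in L(G)$.

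With these translations the two implications become mechanical. For the forward direction, given $(s, \sigma) \in AP_{dmg}$, set $w = \hat{P}_{o, A}^{V}(P_o(s))$ and $y = \Delta^{SUB}(y_0, w)$; the state $v$ reached by $s$ lies in $y$ and satisfies $\delta^{GP}(v, \sigma) = \top$ by Condition 1), while Condition 2) of the attack pair is exactly the requirement that every $v' \in y$ with $\delta^{GP}(v', \sigma)!$ has $\delta^{GP}(v', \sigma) = \top$; hence $\sigma \in Lf(y)$, so $y$ is reachable with $Lf(y) \neq \varnothing$. For the converse, a reachable $y = \Delta^{SUB}(y_0, w)$ with $\sigma \in Lf(y)$ supplies a witness $v \in y$, reached by some $s \in L(S \lVert G)$ with $\hat{P}_{o, A}^{V}(P_o(s)) = w$ and $s\sigma \in L_{dmg}$ (Condition 1), and the universal clause of $Lf$ transfers to Condition 2) for every observation-equivalent $s'$; thus $(s, \sigma) \in AP_{dmg}$ and $AP_{dmg} \neq \varnothing$.

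The main obstacle is the bookkeeping claim in the first paragraph: making rigorous that the subset-construction state reached by $w$ is exactly the set of $GPS$-states reached by all strings $s'$ with $\hat{P}_{o, A}^{V}(P_o(s')) = w$, including the correct handling of the $\Sigma_{uo}$-closures inside $y_0$ and $\Delta^{SUB}$ and the fact that $GPS^2(G, S^A, H)$ is $\epsilon$-nondeterministic. Once that correspondence is secured, the match between the two clauses of $Lf(y)$ and the two conditions defining an attack pair is a direct substitution, the only point needing care being that the existential witness in $Lf$ and the universal quantifier over $v' \in y$ both range over the same observation class, which is guaranteed by the determinism of $G$, $S$, $H$ and the equality $V(P_o(s')) = V(P_o(s))$ on that class.
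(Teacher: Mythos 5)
Your proposal is correct and takes essentially the same route as the paper's own proof: both reduce the two clauses of $Lf$ to the two conditions defining an attack pair via the correspondence between the subset-construction state $\Delta^{SUB}(y_0, w)$ and the set of $GPS$-states reached by strings $s' \in L(S \lVert G)$ with $\hat{P}_{o, A}^{V}(P_o(s')) = w$, and both use $V(P_o(s)) = V(P_o(s'))$ on an observation class to match the definedness premise $\delta^{GP}(v', \sigma)!$ with the premise $s'\sigma \in L(G)$ of Condition~2). The only difference is presentational: you propose to establish the bookkeeping correspondence explicitly by induction on $|w|$, whereas the paper asserts it directly ``by the construction of $GPS^1$ and $SUB(GPS^2)$''.
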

\begin{proof}
Suppose there exists a reachable state $y$ in $SUB(GP(G, S^A, H))$ with $Lf(y) \neq \varnothing$. Let $\sigma \in Lf(y)$. By the definition, $\sigma \in \Sigma_{c, A}$ and there exists some $v \in y$ such that, 1) $\delta^{GP}(v, \sigma)=\top$ and 2) for any $v' \in y$, $\delta^{GP}(v', \sigma)!$ implies $\delta^{GP}(v', \sigma)=\top$. Then, by the construction of $GPS^1(G, S^A, H)$ and $SUB(GPS^2(G, S^A, H))$, there exists some string $s \in L(S \lVert G)$ such that 
\begin{enumerate}
\item $\delta^{GPS1}((q_0, x_0, z_0), s)=v$, and 
\item for any $v' \in y$, there exists some string $s' \in L(S \lVert G)$ such that $\delta^{GPS1}((q_0, x_0, z_0), s')=v'$ and $\hat{P}_{o, A}^{V}(P_o(s))=\hat{P}_{o, A}^{V}(P_o(s'))$; moreover, for any $s' \in L(S \lVert G)$ such that $\hat{P}_{o, A}^{V}(P_o(s))=\hat{P}_{o, A}^{V}(P_o(s'))$, we have $\delta^{GPS1}((q_0, x_0, z_0), s') \in y$. 
\end{enumerate}
Here, $\delta^{GPS1}$ denotes the partial transition function of $GPS^1(G, S^A, H)$. We shall now show that $(s, \sigma)$ is an attack pair for $(G, S)$ w.r.t. $(\Sigma_{c, A}, \Sigma_{o, A})$ and $L_{dmg}$. It then follows that $AP_{dmg} \neq \varnothing$. 

Indeed, $s\sigma \in L_{dmg}$ since $\delta^{GPS1}((q_0, x_0, z_0), s)=v$ and $\delta^{GP}(v, \sigma)=\top$. For any $s' \in L(S \lVert G)$ such that $\hat{P}_{o, A}^{V}(P_o(s))=\hat{P}_{o, A}^{V}(P_o(s'))$ and $s'\sigma \in L(G)$, we have $\delta^{GPS1}((q_0, x_0, z_0), s') \in y$, $s'\sigma \in L(G)$ and $s'\sigma \notin L(S \lVert G)$, since $s\sigma \notin L(S \lVert G)$ and $V(P_o(s))=V(P_o(s'))$. Thus, we conclude that $\delta^{GP}(\delta^{GPS1}((q_0, x_0, z_0), s'), \sigma)!$ and thus $\delta^{GP}(\delta^{GPS1}((q_0, x_0, z_0), s'), \sigma)=\top$. We then conclude that $s'\sigma \in L_{dmg}$ and then $(s, \sigma)$ is an attack pair for $(G, S)$ w.r.t. $(\Sigma_{c, A}, \Sigma_{o, A})$ and $L_{dmg}$.


Suppose there exists an attack pair $(s, \sigma) \in AP_{dmg}$. Then, by definition, $s\sigma \in L_{dmg}$ and, for any $s' \in L(S \lVert G)$, $\hat{P}_{o, A}^{V}(P_o(s))=\hat{P}_{o, A}^{V}(P_o(s'))$ and $s'\sigma \in L(G)$ together implies $s'\sigma \in L_{dmg}$. It is clear that $\Delta^{SUB}(y_0, \hat{P}_{o, A}^{V}(P_o(s)))!$ since $\delta^{GPS1}((q_0, x_0, z_0), s)!$. We now show that $\sigma \in Lf(\Delta^{SUB}(y_0, \hat{P}_{o, A}^{V}(P_o(s))))$. It then follows that $Lf(\Delta^{SUB}(y_0, \hat{P}_{o, A}^{V}(P_o(s)))) \neq \varnothing$ and $\Delta^{SUB}(y_0, \hat{P}_{o, A}^{V}(P_o(s)))$ is a reachable state of $SUB(GP(G, S^A, H))$.

First of all, we have $\sigma \in \Sigma_{c, A}$ by the definition of an attack pair. It is clear that $\delta^{GPS1}((q_0, x_0, z_0), s) \in \Delta^{SUB}(y_0, \hat{P}_{o, A}^{V}(P_o(s)))$. We only need to show the following:
\begin{enumerate}
\item $\delta^{GP}(\delta^{GPS1}((q_0, x_0, z_0), s), \sigma)=\top$, and
\item for any $v' \in \Delta^{SUB}(y_0, \hat{P}_{o, A}^{V}(P_o(s)))$, $\delta^{GP}(v', \sigma)!$ implies $\delta^{GP}(v', \sigma)=\top$. 
\end{enumerate}
Then, it follows that $\sigma \in Lf(\Delta^{SUB}(y_0, \hat{P}_{o, A}^{V}(P_o(s))))$. 

The first item is straightforward from $s\sigma \in L_{dmg}$. In the rest, we prove the second item.

Let $v' \in \Delta^{SUB}(y_0, \hat{P}_{o, A}^{V}(P_o(s)))$. It follows that there exists some string $s' \in L(S \lVert G)$ such that $\delta^{GPS1}((q_0, x_0, z_0), s')=v'$ and $\hat{P}_{o, A}^{V}(P_o(s))=\hat{P}_{o, A}^{V}(P_o(s'))$. Now, we suppose $\delta^{GP}(v', \sigma)!$, then we have $s'\sigma \in L(G)$. That is, we have $s' \in L(S \lVert G)$, $\hat{P}_{o, A}^{V}(P_o(s))=\hat{P}_{o, A}^{V}(P_o(s'))$ and $s'\sigma \in L(G)$. Thus, we can then conclude that $s'\sigma \in L_{dmg}$. That is, we have $\delta^{GP}(\delta^{GPS1}((q_0, x_0, z_0), s'), \sigma)=\top$, i.e., $\delta^{GP}(v', \sigma)=\top$.

\end{proof}

Based on Proposition~\ref{propo: existence}, we immediately have the following theorem. 
\begin{theorem}
$(G, S)$ is attackable with respect to $(\Sigma_{c, A}, \Sigma_{o, A})$ and $L_{dmg}$ iff there exists a reachable state $y$ in $SUB(GP(G, S^A, H))$ with $Lf(y) \neq \varnothing$.
\end{theorem}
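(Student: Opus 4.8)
The plan is to recognize this theorem as an immediate corollary of Proposition~\ref{propo: existence}, so the only real work is a definitional reformulation. First I would unfold the definition of attackability: by the definition of an attack pair and of the set $AP_{dmg}$, the pair $(G, S)$ is attackable w.r.t. $(\Sigma_{c, A}, \Sigma_{o, A})$ and $L_{dmg}$ precisely when there exists some $(s, \sigma) \in L(S \lVert G) \times \Sigma_{c, A}$ satisfying Conditions 1) and 2) of the attack-pair definition. Since $AP_{dmg}$ was introduced as exactly the collection of all such attack pairs, attackability of $(G, S)$ is equivalent to $AP_{dmg} \neq \varnothing$. This equivalence is purely a matter of unwinding notation and requires no argument beyond citing how $AP_{dmg}$ was defined.

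Having established this reformulation, I would then simply invoke Proposition~\ref{propo: existence}, which asserts that $AP_{dmg} \neq \varnothing$ holds if and only if there exists a reachable state $y$ of $SUB(GP(G, S^A, H))$ with $Lf(y) \neq \varnothing$. Chaining the two equivalences—attackability $\Leftrightarrow$ $AP_{dmg} \neq \varnothing$ $\Leftrightarrow$ existence of a reachable labeled state—yields the claimed statement verbatim. I do not expect any obstacle here: all the substantive content, namely the correspondence between attack pairs and labeled states of the subset construction (the forward direction extracting an attack pair $(s, \sigma)$ from a state $v \in y$ with $\delta^{GP}(v, \sigma) = \top$, and the backward direction showing $\sigma \in Lf(\Delta^{SUB}(y_0, \hat{P}_{o, A}^{V}(P_o(s))))$), has already been carried out in the proof of Proposition~\ref{propo: existence}. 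The theorem is therefore best presented as a one-line restatement of that proposition in the language of attackability.
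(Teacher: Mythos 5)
Your proposal is correct and matches the paper exactly: the paper gives no separate proof of this theorem, stating only that it follows immediately from Proposition~\ref{propo: existence}, since attackability of $(G, S)$ is by definition equivalent to $AP_{dmg} \neq \varnothing$. Your one-line chaining of the two equivalences is precisely the intended argument.
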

Furthermore, if $(G, S)$ is attackable w.r.t. $(\Sigma_{c, A}, \Sigma_{o, A})$ and $L_{dmg}$, then $A^{sup}:\hat{P}_{o, A}^{V}(P_o(L(S \lVert G))) \mapsto \Delta \Gamma$ is realized by $SUB(GP(G, S^A, H))$ in the following sense.
\begin{theorem}
Suppose $AP_{dmg} \neq \varnothing$. Then, $I(\hat{P}_{o, A}^{V}(P_o(s)))= Lf(\Delta^{SUB}(y_0, \hat{P}_{o, A}^{V}(P_o(s))))$ for any $s \in L(S \lVert G)$.
\end{theorem}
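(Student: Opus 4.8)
The plan is to localize the two directions already carried out in the proof of Proposition~\ref{propo: existence} to the single observation sequence $\hat{P}_{o, A}^{V}(P_o(s))$, thereby matching membership in $I(\hat{P}_{o, A}^{V}(P_o(s)))$ with membership in $Lf(\Delta^{SUB}(y_0, \hat{P}_{o, A}^{V}(P_o(s))))$. The engine for both directions is the subset-state characterization, which I would record first. Writing $y_s:=\Delta^{SUB}(y_0, \hat{P}_{o, A}^{V}(P_o(s)))$ and letting $\delta^{GPS1}$ denote the transition function of $GPS^1(G, S^A, H)$, a state $v$ lies in $y_s$ if and only if $v=\delta^{GPS1}((q_0, x_0, z_0), s')$ for some $s' \in L(S \lVert G)$ with $\hat{P}_{o, A}^{V}(P_o(s'))=\hat{P}_{o, A}^{V}(P_o(s))$. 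This is exactly the correspondence established in items 1) and 2) inside the proof of Proposition~\ref{propo: existence}, and it holds because $GPS^1$ tracks $L(S \lVert G)$ on the first component, $GPS^2$ tracks the attacker's observation sequence, and the subset construction collects precisely the $GPS^1$-targets of observationally equivalent strings. I would also note once, as in the proof of Theorem~\ref{theorem:existence}, that $\hat{P}_{o, A}^{V}(P_o(s'))=\hat{P}_{o, A}^{V}(P_o(s''))$ forces $V(P_o(s'))=V(P_o(s''))$, so the set of supervisor-disabled events coincides along observationally equivalent strings.

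For the inclusion $I(\hat{P}_{o, A}^{V}(P_o(s))) \subseteq Lf(y_s)$, I would take $\sigma \in I(\hat{P}_{o, A}^{V}(P_o(s)))$, pick the witnessing $s' \in L(V/G)$ with $\hat{P}_{o, A}^{V}(P_o(s))=\hat{P}_{o, A}^{V}(P_o(s'))$ and $(s', \sigma) \in AP_{dmg}$, and set $v=\delta^{GPS1}((q_0, x_0, z_0), s') \in y_s$. Attack-pair condition 1) gives $s'\sigma \in L_{dmg}$, whence $\sigma \notin V(P_o(s'))$ (using $L_{dmg} \cap L(V/G)=\varnothing$ and normality); by item 4) of the definition of $\delta^{GP}$ these data yield $\delta^{GP}(v, \sigma)=\top$, so the first requirement of $Lf$ holds. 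For the second requirement, any $v' \in y_s$ with $\delta^{GP}(v', \sigma)!$ equals $\delta^{GPS1}((q_0, x_0, z_0), s'')$ for some $s'' \in L(S \lVert G)$ observationally equivalent to $s'$, and $\delta^{GP}(v', \sigma)!$ forces $s''\sigma \in L(G)$; attack-pair condition 2) for $(s', \sigma)$ (with $s''$ as the quantified string) then gives $s''\sigma \in L_{dmg}$, hence $\delta^{GP}(v', \sigma)=\top$. Therefore $\sigma \in Lf(y_s)$.

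The reverse inclusion $Lf(y_s) \subseteq I(\hat{P}_{o, A}^{V}(P_o(s)))$ runs symmetrically: given $\sigma \in Lf(y_s)$ with witness $v \in y_s$, I extract $s' \in L(S \lVert G)$ with $\delta^{GPS1}((q_0, x_0, z_0), s')=v$ and $\hat{P}_{o, A}^{V}(P_o(s'))=\hat{P}_{o, A}^{V}(P_o(s))$, and show $(s', \sigma)$ is an attack pair. Condition 1) of the attack pair comes directly from $\delta^{GP}(v, \sigma)=\top$ via item 4) of $\delta^{GP}$, yielding $s'\sigma \in L_{dmg}$. For condition 2), any $s'' \in L(V/G)$ observationally equivalent to $s'$ with $s''\sigma \in L(G)$ satisfies $\sigma \notin V(P_o(s''))$, since $V(P_o(s''))=V(P_o(s'))$ and $\sigma$ is disabled at $s'$, so $\delta^{GP}(\delta^{GPS1}((q_0, x_0, z_0), s''), \sigma)!$; as $\delta^{GPS1}((q_0, x_0, z_0), s'') \in y_s$, the second requirement of $Lf$ forces this transition to $\top$, giving $s''\sigma \in L_{dmg}$. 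Thus $(s', \sigma) \in AP_{dmg}$ with $s'$ observationally equivalent to $s$, so $\sigma \in I(\hat{P}_{o, A}^{V}(P_o(s)))$. The main obstacle is not any single step but the bookkeeping that identifies the product-level event ``$\delta^{GP}(v', \sigma)!$'' with the external condition ``$s''\sigma \in L(G)$ and $\sigma$ is supervisor-disabled'', which is exactly where observation equivalence ($V(P_o(s''))=V(P_o(s'))$) and items 4)--5) of $\delta^{GP}$ must be combined; once this dictionary is fixed, both inclusions reduce to the definitions.
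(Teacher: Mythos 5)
Your proof is correct and follows essentially the same route as the paper: both directions reduce membership in $I(\hat{P}_{o, A}^{V}(P_o(s)))$ and in $Lf(\Delta^{SUB}(y_0, \hat{P}_{o, A}^{V}(P_o(s))))$ to the attack-pair conditions via the correspondence between subset-construction states and observationally equivalent strings, exactly as in the paper's proof (which handles the forward inclusion by citing the second half of the proof of Proposition~\ref{propo: existence}, a step you simply unfold explicitly). The only difference is presentational: you state the subset-state characterization once up front and argue both inclusions symmetrically, which is slightly more self-contained than the paper's version.
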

\begin{remark}
Recall that $A^{sup}(\hat{P}_{o, A}^{V}(P_o(s)))=(V(P_o(s)) \cap \Sigma_{c, A}) \cup I(\hat{P}_{o, A}^{V}(P_o(s)))$, for any $s \in L(S \lVert G)$, where $I(\hat{P}_{o, A}^{V}(P_o(s))):=\{\sigma \in \Sigma_{c, A} \mid \exists s' \in L(S \lVert G), \hat{P}_{o, A}^{V}(P_o(s))=\hat{P}_{o, A}^{V}(P_o(s')) \wedge (s', \sigma) \in AP_{dmg}\}$. It is not difficult to see that $I(\hat{P}_{o, A}^{V}(P_o(s)))$ is indeed the set of attackable events that are attacked by $A^{sup}$ after observing the attacker's observation sequence $\hat{P}_{o, A}^{V}(P_o(s))$. Thus, this theorem states that, after the sequence $\hat{P}_{o, A}^{V}(P_o(s))$ is observed, the subset of attackable events that are attacked by $A^{sup}$ is $Lf(\Delta^{SUB}(y_0, \hat{P}_{o, A}^{V}(P_o(s))))$.
\end{remark}
\begin{proof}
Let $\sigma \in I(\hat{P}_{o, A}^{V}(P_o(s)))$. Then, by the definition, there exists some $s' \in L(S \lVert G)$ such that $\hat{P}_{o, A}^{V}(P_o(s))=\hat{P}_{o, A}^{V}(P_o(s'))$ and $(s', \sigma) \in AP_{dmg}$. From the proof of Proposition~\ref{propo: existence}, we conclude that $\sigma \in Lf(\Delta^{SUB}(y_0, \hat{P}_{o, A}^{V}(P_o(s'))))=Lf(\Delta^{SUB}(y_0, \hat{P}_{o, A}^{V}(P_o(s))))$. 

Let $\sigma \in Lf(\Delta^{SUB}(y_0, \hat{P}_{o, A}^{V}(P_o(s))))$. By the definition, $\sigma \in \Sigma_{c, A}$ and there exists some $s' \in L(S \lVert G)$ such that 
\begin{enumerate}
\item $\hat{P}_{o, A}^{V}(P_o(s'))))=\hat{P}_{o, A}^{V}(P_o(s))))$, $\delta^{GPS1}((q_0, x_0, z_0), s') \in \Delta^{SUB}(y_0, \hat{P}_{o, A}^{V}(P_o(s)))$ and $\delta^{GP}(\delta^{GPS1}((q_0, x_0, z_0), s'), \sigma)=\top$
\item for any $s'' \in L(S \lVert G)$ such that $\hat{P}_{o, A}^{V}(P_o(s''))))=\hat{P}_{o, A}^{V}(P_o(s))))$, we have $\delta^{GPS1}((q_0, x_0, z_0), s'') \in \Delta^{SUB}(y_0, \hat{P}_{o, A}^{V}(P_o(s)))$,
and $\delta^{GP}(\delta^{GPS1}((q_0, x_0, z_0), s''),\sigma)!$ implies $\delta^{GP}(\delta^{GPS1}((q_0, x_0, z_0), s''),\sigma)=\top$
\end{enumerate}
The first item says that $s'\sigma \in L_{dmg}$ and the second item says that for any $s'' \in L(S \lVert G)$ such that $\hat{P}_{o, A}^{V}(P_o(s''))))=\hat{P}_{o, A}^{V}(P_o(s'))))$, $s''\sigma \in L(G)$ implies $s''\sigma \in L_{dmg}$. Thus, we conclude that $(s', \sigma)$ is an attack pair w.r.t. $(\Sigma_{c, A}, \Sigma_{o, A})$ and $L_{dmg}$. Since $s' \in L(S \lVert G)$ and $\hat{P}_{o, A}^{V}(P_o(s'))))=\hat{P}_{o, A}^{V}(P_o(s))))$, we conclude that $\sigma \in I(\hat{P}_{o, A}^{V}(P_o(s)))$.
\end{proof}

For example, the Moore automaton $SUB(GP(G', S'^A, H'))$ is shown in Fig.~\ref{fig:final} for the $G'$, $S'$ and $H'$ given previously. In particular, we can conclude that $(G', S')$ is attackable with respect to 
$(\Sigma_{c, A}, \Sigma_{o, A})$ and $L'_{dmg}$, since 
\begin{center}
$Lf(\{(3,3,3), (2,3,4)\}) \neq \varnothing$.
\end{center} 
The supremal successful attacker $A'^{sup}$ performs attack after observing $\overline{l}=(\epsilon, \{b, c\})$ (the attacker can conclude that string $s=a'$ or $s=ba'$ has been generated), and the attacker enables attackable events $d, d'$ and leads to successful attack ($I(\epsilon, \{b, c\})=\{d, d'\}$).

\begin{figure}[h]
\centering
\hspace*{-1mm}
\captionsetup{justification=centering}
\includegraphics[width=4.0in, height=2.4in]{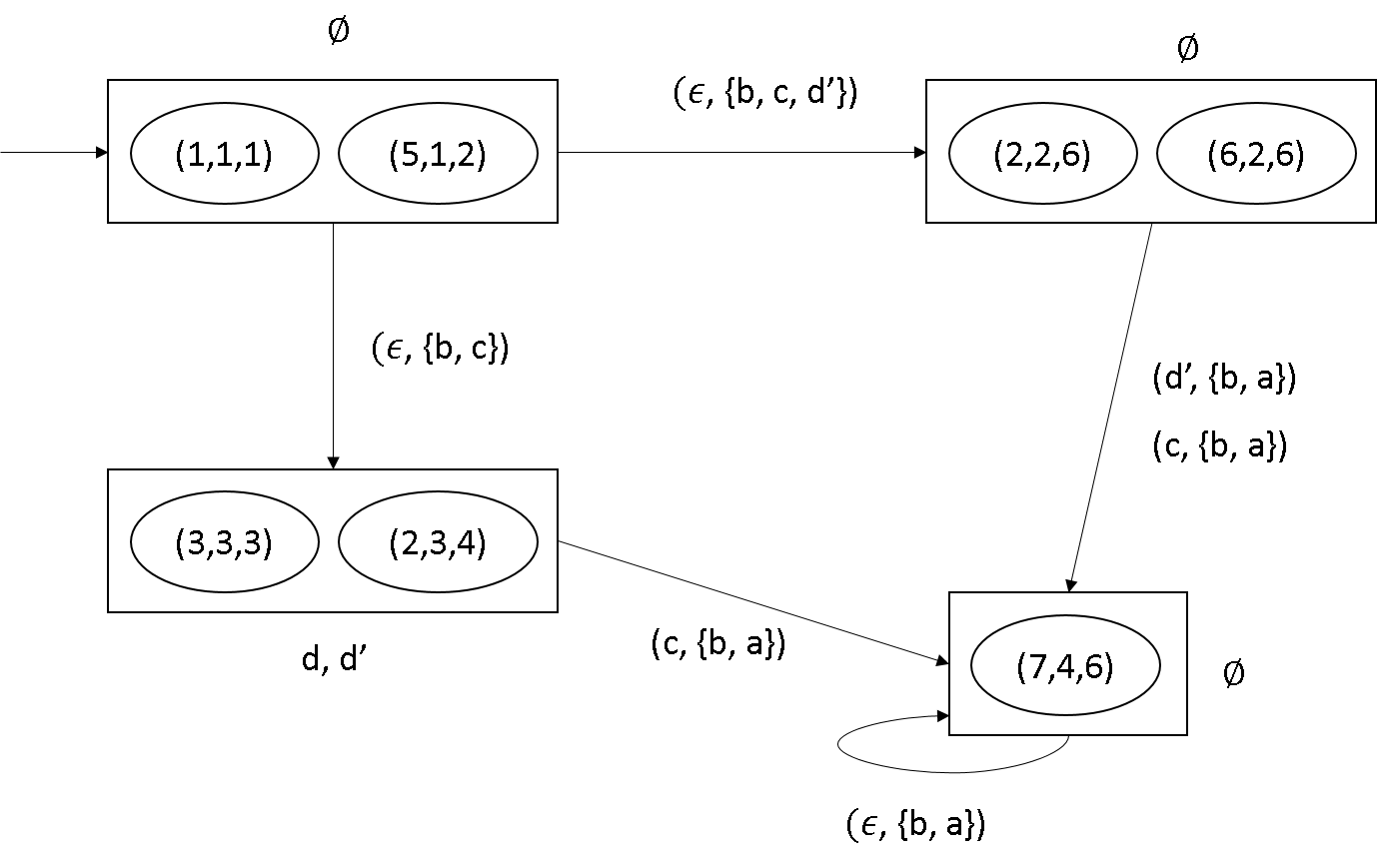} 
\caption{A Moore automaton representation of the supremal successful enabling attacker $A^{sup}$}
\label{fig:final}
\end{figure}

\section{Discussions and Conclusions}
\label{section:DC}
In this work, we have formulated the following two synthesis problems within discrete-event systems formalism.
\begin{enumerate}
\item the problem of synthesis of successful actuator attackers on supervisors. 
\item the problem of synthesis of resilient supervisors against actuator attackers. 
\end{enumerate}
Moreover, we have resolved the problem of synthesis of successful actuator attackers on normal supervisors under the assumption $\Sigma_{c, A} \subseteq \Sigma_{o, A}$, i.e., a normality assumption on the attacker, in which case the supremal successful actuator attackers exist and can be synthesized based on the synthesis algorithm provided in Section~\ref{subsection:SSSEA}. 

This research work imposes some reasonable assumptions (see Remark~\ref{remark: asspt}) in order for the exposition to remain elementary. In the following, we shall briefly discuss about some future research directions that could be continued based on this work.

First of all, the supremal actuator attackers synthesized in this work are considered to be passive. They are passive since they would never disable attackable events that are enabled by the supervisors (i.e., they are enabling). A passive attacker never alters the execution of the closed-loop system within $L(V/G)$, and it only patiently waits for its chance in order to establish a successful attack; on the other hand, an active attacker will influence the execution of the closed-loop system within $L(V/G)$, by properly disabling attackable events that are enabled by the supervisors, so that there is a higher chance for it to establish a successful attack in any single run. From this point of view, the attack goal 
\begin{enumerate}
\item $L(V_A/G) \cap L_{dmg}\neq \varnothing$
\item $L(V_A/G)-P_{o}^{-1}P_o(L(V/G))\subseteq L_{dmg}\Sigma^*$.
\end{enumerate}
is to some extent actually a {\em weak attack goal}, and then the attackability formulated in this paper corresponds to {\em weak attackability}. It is of interest to impose other conditions so that the synthesized supervisors can achieve {\em strong attack goals} and we need to formulate a notion of {\em strong attackability} correspondingly\footnote{Alternatively, we may refer to weak attack as passive attack and strong attack as active attack.}. Intuitively, in strong attack, the attacker may disable attackable events that are enabled by the supervisors for two different purposes: 1) increase the chance that the attack will be successful, 2) increase the chances that the closed-loop system will execute a vulnerable string\footnote{a string $s \in L(V/G)$ is said to be {\em vulnerable} if there exists some $\sigma \in \Sigma_{c, A}$ such that $(s, \sigma) \in AP_{dmg}$.} that can be attacked. 

Secondly, we assume that the attackers will not take risks in attacking the supervisors. This greatly limits the attack capability of the attackers. In many scenarios, it is useful to consider the synthesis of risky attackers. An example of a risky attack is to attack an attackable event when the attacker knows it is possible but is unsure whether a damaging string will be generated or not.  

Thirdly, we impose in this work the assumption $\Sigma_{c, A} \subseteq \Sigma_{o, A}$ and only consider the case of attacking normal supervisors. It is of interest to relax these two assumptions and study under what circumstances can supremal successful attackers be synthesized in general. 

Finally, the main purpose for studying the synthesis of successful attackers is to synthesize resilient supervisor against actuator attackers. Thus, it is crucial to address the problem of synthesis of resilient supervisors, which is also left as a future work. 

It is also of interest to relax assumption that the set $\Sigma_{o, A}$ of events observable to the attacker is a subset of the set $\Sigma_o$ of observable events for the supervisor. It is also of interest to consider attack architectures with distributed or hierarchical architectures~\cite{WMW10}. 

\begin{acknowledgements}
This work is financially supported by Singapore Ministry of Education Academic Research Grant RG91/18-(S)-SU RONG (VP), which is gratefully acknowledged. 
\end{acknowledgements}

\addtolength{\textheight}{-10cm}   

\end{document}